	\mathchardef\mathcomma\mathcode`\,
\mathchardef\breakingcomma\mathcode`\,
\newcommand\numberthis{\addtocounter{equation}{1}\tag{\theequation}}
\newcommand{\PNpre}[1]{{}^{\bullet}{#1}}    % shorthand preset, non-indexable
\newcommand{\PNpost}[1]{{#1}^{\bullet}}     % shorthand postset, non-indexable
\begin{document}

%%%%%%%  to be filled in by copy-editor:  %%%%%%%%%%%
%
\setcounter{page}{203}
\publyear{2021}
\papernumber{2087}
\volume{183}
\issue{3-4}

  \finalVersionForARXIV
  %%%  \finalVersionForIOS

%%%%%%%%%%%%%%%%%%%%%%%%%%%%%%%%%

\title{Inferring Unobserved Events in Systems with\\ Shared Resources and Queues}

\author{Dirk Fahland\thanks{Address  for correspondence: TU Eindhoven, PO Box 513, 5600MB Eindhoven, NL},  Vadim Denisov\\
Eindhoven University of Technology\\
Eindhoven, the Netherlands\\
\{d.fahland, v.denisov\}@tue.nl\vspace*{2mm}
\and Wil. M.P. van der Aalst\\
Process and Data Science (Informatik 9)\\
RWTH Aachen University, Aachen, Germany\\
wvdaalst@pads.rwth-aachen.de
 }

\maketitle

\runninghead{D. Fahland et al.}{Repairing Event Logs of Systems with Shared Resources}

\begin{abstract}
To identify the causes of performance problems or to predict process behavior, it is essential to have correct and complete event data. This is particularly important for distributed systems with shared resources, e.g., one case can block another case competing for the same machine, leading to inter-case dependencies in performance. However, due to a variety of reasons, real-life systems often record only a subset of all events taking place. To understand and analyze the behavior and performance of processes with shared resources, we aim to reconstruct bounds for timestamps of events in a case that must have happened but were not recorded by inference over events in other cases in the system. We formulate and solve the problem by systematically introducing multi-entity concepts in event logs and process models. We introduce a partial-order based model of a multi-entity event log and a corresponding compositional model for multi-entity processes. We define PQR-systems as a special class of multi-entity processes with shared resources and queues. We then study the problem of inferring from an incomplete event log unobserved events and their timestamps that are globally consistent with a PQR-system. We solve the problem by reconstructing unobserved traces of resources and queues according to the PQR-model and derive bounds for their timestamps using a linear program. While the problem is illustrated for  material handling systems like baggage handling systems in airports, the approach can be applied to other settings where recording is incomplete. The ideas have been implemented in ProM and were evaluated using both synthetic and real-life event logs.
\end{abstract}

\begin{keywords}
Log repair, Process mining, Performance analysis, Multi-entity modeling, Multi-entity event logs, Conformance checking, Material handling systems
\end{keywords}

\section{Introduction}
\label{sec:introduction}

Precise knowledge about actual process behavior and performance is required for identifying causes of performance issues~\cite{MarusterB_2009_redesigning}, as well as for predictive process monitoring of important process performance indicators~\cite{MarquezChamorro_Survey}. For Material Handling Systems (MHS), such as Baggage Handling Systems (BHS) of airports, performance incidents are usually investigated offline, using recorded event data  for finding root causes of problems~\cite{DFA_unbiased_fine_grained}, while online event streams are used as input for predictive performance models~\cite{AhmedPCL_OnlineRisk}. Both analysis and monitoring heavily rely on the completeness and accuracy of input data. For example, events may not be recorded and, as a result, we do not know when they happened even though we can derive that they must have happened. Yet, when different cases are competing for shared resources, it is important to reconstruct the ordering of events and provide bounds for non-observed timestamps.

\medskip
However, in most real-life systems, items are not continuously tracked and not all events are stored for cost-efficiency, leading to incomplete performance information which impedes precise analysis. For example, an MHS tracks the location of an item, e.g., a bag or box, via hardware sensors placed throughout the system, generating tracking events for system control, monitoring, analysis, and prediction. Historically, to reduce costs, a tracking sensor is only installed when it is strictly necessary for the correct execution of a particular operation, e.g., only for the precise positioning immediately before shifting a bag from one conveyor onto another. Moreover, even when a sensor is installed, an event still can be discarded to save storage space.
As a result, the recorded event data of an MHS are typically incomplete, hampering analysis based on such incomplete data. Therefore, it is essential to repair the event data before analysis.
Fig.~\ref{fig:ps} shows a simple MHS where events are not always recorded. The process model is given and for two cases the recorded incomplete sets of events are depicted using the so-called \emph{Performance Spectrum}~\cite{DFA_unbiased_fine_grained}.

\begin{figure}[t!]
\vspace*{-1mm}
    \includegraphics[width=\linewidth]{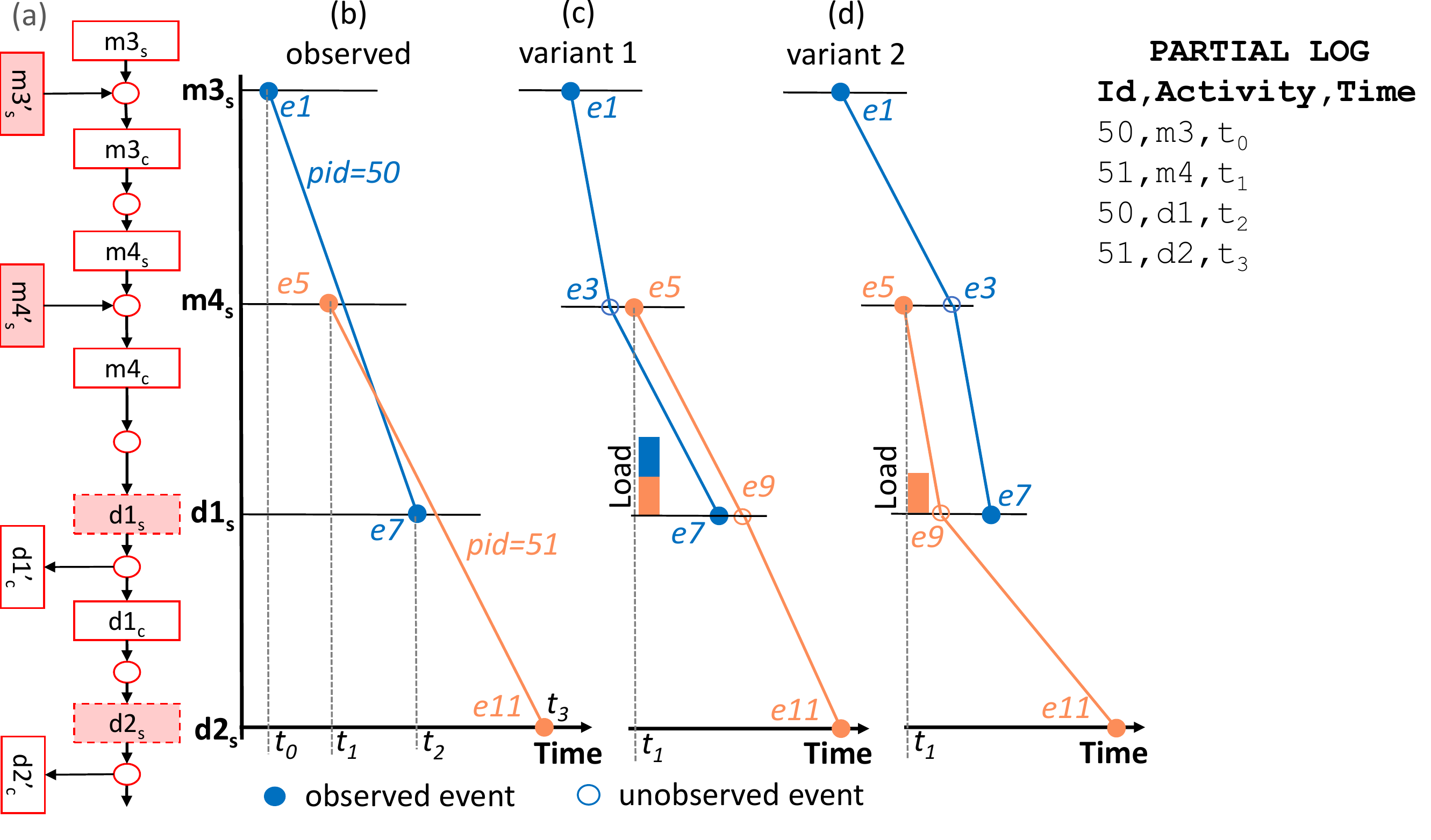}\vspace*{-3mm}
    \caption{An MHS model example (a), observed imprecise behavior for two cases 50 and 51 (b), possible actual behaviors (c,d).}
    \label{fig:ps}\vspace*{-2mm}
\end{figure}

\medskip
Fig.~\ref{fig:ps}(b) shows item pid=50 entering the system via $m3$ at time $t_0$ (event $e_1$) and leaving the system via $d1$ at time $t_2$ ($e_7$), and item pid=51 entering the system via $m4$ at time $t_1$ ($e_5$) and leaving the system via $d2$ at time $t_3$ ($e_{11}$). As only these four events are recorded, the event data do not provide information in which order both cases traversed the \emph{segment} $m4 \to d1$. Naively interpolating the movement of both items, as shown in Fig.~\ref{fig:ps}(b), suggests that item pid=51 overtakes item pid=50. This contradicts that all items are moved from $m4$ to $d1$ via a conveyor belt, i.e., a FIFO queue: item 51 cannot have overtaken item 50.
In contrast, Fig.~\ref{fig:ps}(c) and Fig.~\ref{fig:ps}(d) show two possible behaviors that are consistent with our knowledge of the system. We know that a conveyor belt (FIFO queue) is a shared resource between $m4$ and $d1$. Both variants differ in the order in which items 50 and 51 enter and leave the shared resource, the speed with which the resource operated, and the load and free capacity the resource had during this time.
In general, the longer the duration of naively interpolated segment occurrences, the larger the potential error. Errors in load, for example, make performance outlier analysis~\cite{DFA_unbiased_fine_grained} or short-term performance prediction~\cite{DFA_ICPM2019} rather difficult. Errors in order impede root-cause analysis of performance outliers, e.g., finding the cases that caused or were affected by outlier behavior.

\par\vspace{.25em}\noindent\textbf{Problem.} In this paper, we address a novel type of problem as illustrated in Fig.~\ref{fig:ps} and explained above. The behavior and performance of the system cannot be determined by the properties of each case in isolation, but depends on the \emph{behavior of other cases} and the \emph{behavior of the shared resources} involved in the cases.
Crucially, each case is handled by multiple resources and each resource handles multiple cases, resulting in \emph{many-to-many} relations between them.
The concrete problem we address is to reconstruct \emph{unobserved} behavior and performance information of \emph{each case} and each \emph{shared resource} in the system that is \emph{consistent} with both observed and reconstructed unobserved behavior and performance of all other cases and shared resources.
More specifically, we consider the following information as given: (1) an event log $L_1$ containing the case identifier, activity and time for recorded events where intermediate steps are not recorded (i.e., the event log may be incomplete), (2) a model of the process (i.e., possible paths for handling each individual case), and (3) a description (model) of the resources involved in each step (e.g., queues, single server resources and their performance parameters such as processing and waiting time). Based on the above input, we want to provide a complete event log $L_2$ that describes (1) for each case the exact sequence of process steps, (2) and for each unobserved event a time-window of earliest and latest occurrence of the event so that (3) either all earliest or all latest timestamps altogether describe a consistent execution of the entire process over all shared resources.

We elaborate the problem further in Sect.~\ref{sec:related_work} where we discuss related work. Specifically, prior work either only considers the case or the resource perspective explicitly, making implicit assumptions about their complex interplay. The goal of this paper is to explicitly account for the interplay of control-flow, resources, and queues in the entire \emph{system}. This requires us to first identify and develop suitable formal concepts that allow us to precisely state and automatically solve the above problem of inferring missing events and their time-stamps in a way that considers all perspectives jointly.

\medskip
\noindent\textbf{Contribution.} We approach the problem under the conceptual lens of treating each process case, resource, or queue as a separate entity exhibiting its own behavior. System behavior then is the result of multiple entities synchronizing in joint steps, e.g., when a resource starts working on a case. Section~\ref{sec:modelling} further develops the problem of inferring missing events under this conceptual lens.
To solve the problem, the paper systematically introduces multi-entity concepts in formal models for event logs and in process models with the following four contributions.

\smallskip
(1) To ground the problem in existing types of event data, we propose in Sect.~\ref{sec:logs} an alternative definition of event logs that can handle multiple entity identifiers. The information is carried in an event table with multiple entity identifier columns. We then show that the information in this table can be viewed from two different but equivalent perspectives: (i) as a family of sequential event logs, one per entity type; and (ii) as a global strict partial order over all events that is typed with entity types and can be understood as a \emph{system-level run}. This model allows us to conceptually decompose behavior (run of a system or event data) into individual \emph{entity traces} of process cases, resources, and queues. Different entity traces synchronize when a resource or queue is involved in a case, allowing to explicitly describe their many-to-many relations in the run.

\smallskip
(2) To provide a well-defined problem of repairing incomplete event logs, we develop a novel conceptual model for processes with shared resources in Sect.~\ref{sec:pqr}. We extend the recently proposed synchronous proclet model~\cite{DFahlandMultiDim} with concepts of coloured Petri nets~\cite{DBLP:journals/cacm/JensenK15} to precisely describe queueing and timed behavior in systems with multiple synchronizing entities, resulting in the model of \emph{CPN proclet systems}. We provide a replay semantics for CPN proclet systems that defines when a model accepts a given event log. Our semantics is compositional: the system can replay the log iff each component can replay the part of the log it relates to. A side product of this work is that we also provide a semantics for replaying event logs on regular coloured Petri nets.

\smallskip
(3) We then formalize a special class of CPN proclet systems called \emph{PQR systems} which are composed of one component for the process, and multiple components for shared resources and queues. PQR systems allow to model processes where each step is served by one single-server resource and resources are connected by strict FIFO queues only. These assumptions are reasonable for a large class of MHSs.

\smallskip
(4) We then provide an automated technique to solve the problem for PQR systems where the process is \emph{acyclic} which suffices for many real-life problem instances. The central solution idea given in Sect.~\ref{sec:solution} is to decompose the behavioral information in the incomplete event log into entity traces. We gradually infer unobserved events and unobserved entity traces and their synchronization with other entities from the component-based structure of the PQR system. We then formulate a Linear Programming (LP) problem~\cite{Schrijver86} to infer upper and lower bounds of timestamps of unobserved events based on bounds of timestamps along the different entity traces.

\medskip
We evaluated the approach by comparing the restored event logs with the ground truth for synthetic logs and estimate errors for real-life event logs for which the ground truth is unavailable (Sect.~\ref{sec:evaluation}).
We discuss our findings and alleys for future work in Sect.~\ref{sec:conclusion}.

%%% =================================================================================
\section{Related work}
\label{sec:related_work}
%%% =================================================================================

In all operational processes (logistics, manufacturing, healthcare, education and so on) complete and precise event data, including information about workload and resource utilization, is highly valuable since it allows for process mining techniques uncovering compliance and performance problems.
Event data can be used to replay processes on top of process models~\cite{Aalst_2016_pm_book}, to predict process behavior~\cite{Senderovich_DD,DFA_ICPM2019}, or to visualize detailed process behavior using performance spectra~\cite{DFA_unbiased_fine_grained}. All of these techniques rely on complete and correct event data. Since this is often not the case, we aim to transform \emph{incomplete} event data into \emph{complete} event data.

\medskip
Various approaches exist for dealing with incomplete data of processes with non-isolated cases that compete for scarce resources. In call-center processes, thoroughly studied in~\cite{GansCallCenters}, queueing theory models can be used for load predictions under assumptions about distributions of unobserved parameters, such as customer patience duration~\cite{BrownStatAnalysisCallCenter}, while assuming high load snapshot principle predictors show better accuracy~\cite{Senderovich2014QueueM}. For time predictions in congested systems, the required features are extracted using congestion graphs~\cite{SenderovichConGraphs} mined using queuing theory.

Techniques to repair, clean, and restore event data before analysis have been suggested in other works.
An extensive taxonomy of quality issue patterns in event logs is presented  in~\cite{SURIADI2017132}. The taxonomy specifically discusses how to detect and correct inadvertent time intervals (i.e., time stamps recorded later than the occurrence of the event) through domain knowledge; no automatic technique is provided. The timestamp repair technique in \cite{Conforti2018} automatically reconstructs the most likely order of wrongly recorded events and most likely intervals for timestamps based on other traces; the technique assumes all events were recorded and does not consider ordering constraints due to resources involved across traces. In \cite{MARTIN2020101463} resource availability calendars are retrieved from event logs without the use of a process model, but assuming $start$ and $complete$ life-cycle transitions as well as a case arrival time present in a log.
Using a process model, classical trace alignment algorithms~\cite{Carmona_ConfChecking} restore missing events but do not restore their timestamps. The authors conclude (see~\cite{Carmona_ConfChecking}, p. 262) that incorporating other dimensions, e.g., resources, for multi-perspective trace alignment and conformance checking is an important challenge for the near future.
Recently, also techniques for process discovery and conformance checking over uncertain event data were presented~\cite{PegoraroUncertainData,PegoraroDiscovery}. The output of our approach can provide the input needed for these techniques.

Multiple recent works address behavioral models for behavior over multiple different entities in one-to-many and many-to-many relations. The model of proclets thereby defines one behavioral model (a Petri net) per entity. Entities interact asynchronously via message exchange~\cite{AalstPROCLETS} or synchronously via dynamic transition synchronization~\cite{DFahlandMultiDim}. Object-centric Petri nets~\cite{DBLP:journals/fuin/AalstB20} are a special class of coloured Petri nets~\cite{DBLP:journals/cacm/JensenK15} that are structured to model the flow and synchronization of different objects (or entities); they correspond to synchronous proclets~\cite{DFahlandMultiDim} where the synchronization has been materialized in the model structure. Catalog nets~\cite{DBLP:conf/bpm/GhilardiGMR20} approach the problem from the side databases and model entity behavior by describing database updates through transitions; entity synchronization is similar to synchronous proclets. Process structures~\cite{DBLP:conf/caise/SteinauAR19} integrate relational modeling and behavioral modeling but are using dedicated behavioral model without existing analysis techniques. None of these works so far considered system-level entities such as queues and resources as part of the model to study how system-level entities impact process behavior. Further, none of these works has provided any techniques for reasoning about missing temporal and behavioral information across different entities.

Also data models for event data over multiple entities have been studied extensively in three forms. One type of event logs describe entities just as a sequence (or collection) of events~\cite{DBLP:journals/ijcis/PopovaFD15,DBLP:conf/sefm/Aalst19} where each event carries multiple entity identifier attributes, possibly even having multiple entity identifier values. Behavioral analysis requires to extract a trace per entity, thereby constructing a set of related sequential event logs~\cite{DBLP:journals/ijcis/PopovaFD15,DBLP:journals/tsc/LuNWF15}. Other works construct a partial order over all events using graphs: nodes are events, edges describe when two events directly precede/follow each other and are typed with the entity for which this relation was observed~\cite{DBLP:journals/tsc/WernerG15,DBLP:conf/bpm/EsserF19,DBLP:conf/simpda/BertiA19,DBLP:journals/jodsn/EsserF21}. In this paper, we show that the three representations are essentially equivalent and just materialize the data in different forms; reasoning about incomplete behavior across multiple entities benefits from being able to switch between these perspectives arbitrarily. We thereby adopt a more classical partial-order model instead of a graph as it simplifies reasoning.

Our work contributes to the problem of reconstructing behavior of cases and limited shared resources for which the cases compete.
We use the notion of \emph{proclets} first introduced in~\cite{AalstPROCLETS} and adapted for process mining in~\cite{DFahlandMultiDim} to approach the problem from control-flow and resource perspectives at once. We assume a system model given as a composition of a control-flow proclet (process) and resource/queue proclets. The given event log is a set of events with multiple entity identifiers. We restore missing events through classical trace alignments over control-flow proclets. The dynamic synchronization of proclets~\cite{DFahlandMultiDim} allows us to infer how and when sequential traces of resource entities must have traversed over the control-flow steps, which we express as a linear programming problem to compute time stamp intervals for the restored events. For the construction of the linear program we make extensive use of the partial ordering of events. Event logs repaired in this way enable the use of analysis which assume event logs to be complete.

Compared to a prior version of this article~\cite{DBLP:conf/apn/DenisovFA20}, we here provide a complete formalization of the problem and all underlying concepts, including the definitions of multi-entity event logs, CPN proclet systems and their replay semantics, and a formal definition of PQR systems.

%%% =================================================================================
\section{Modeling inter-case behavior via shared resources}
\label{sec:modelling}
%%% =================================================================================

Prior work (cf. Sect.~\ref{sec:related_work}) approaches the problem of analyzing the performance of systems with shared resources primarily either from the control-flow perspective~\cite{MARTIN2020101463,PegoraroUncertainData,PegoraroDiscovery,Senderovich_DD,DFA_ICPM2019} or the resource/queuing perspective~\cite{GansCallCenters,BrownStatAnalysisCallCenter,Senderovich2014QueueM,SenderovichConGraphs}, leading to information loss about the other perspective. In the following, we show how to conceptualize the problem from both perspectives at once using \emph{synchronous proclets}~\cite{DFahlandMultiDim} extended with a few concepts of coloured Petri Nets~\cite{DBLP:journals/cacm/JensenK15}. This way we are able to capture both control-flow and resource dynamics and their interaction as synchronizing entity traces. We introduce the model in Sect~\ref{sec:architecture} and use it to illustrate how incomplete logging incurs information loss for performance analysis in Sect.~\ref{sec:logging}.

\subsection{Processes-aware systems with shared resources}
\label{sec:architecture}
\label{sec:model}

We explain the dynamics of process-aware systems over shared resources using a BHS handling luggage. The process control-flow takes a bag from a source (e.g., check-in or transfer from another flight), to a destination (e.g., the airplane, transfer) along intermediate process steps (e.g., baggage scanning, storage). BHS resources are primarily single-server machines (e.g., baggage scanners) connected via conveyor belts, i.e., FIFO queues. Fig.~\ref{fig:mfd}(a) shows a typical system design pattern involving the control-flow and resource perspective: four parallel check-in desks (c1-c4) merge into one \emph{linear conveyor} through \emph{merge points} (m2-m4). \emph{Divert points} (d1 and d2) can route bags from the linear conveyor to \emph{scanners} (s1 and s2). Each merge point and scanner is preceded by a FIFO queue for buffering incoming cases (bags) in case the corresponding resource is busy. Fig.~\ref{fig:mfd}(b) shows the plain control-flow of this BHS (also called Material Flow Diagram (MFD)). A real-life BHS may contain hundreds of process steps and resources, and conveyors may also form loops. Each processing step in a BHS is served by a limited number of resources (in case of machines exactly one) with a minimum \emph{processing time} and often a minimum \emph{waiting time} to ensure sufficient ``operating space'' $p$ between two subsequent bags as shown in Fig.~\ref{fig:mfd}(a). Similarly, the conveyor belts realizing FIFO queues have certain operating speeds which determine a minimum \emph{waiting time} to reach the end of queue.

\begin{figure}[!h]
\vspace*{1mm}
    \centering\includegraphics[width=0.8\textwidth]{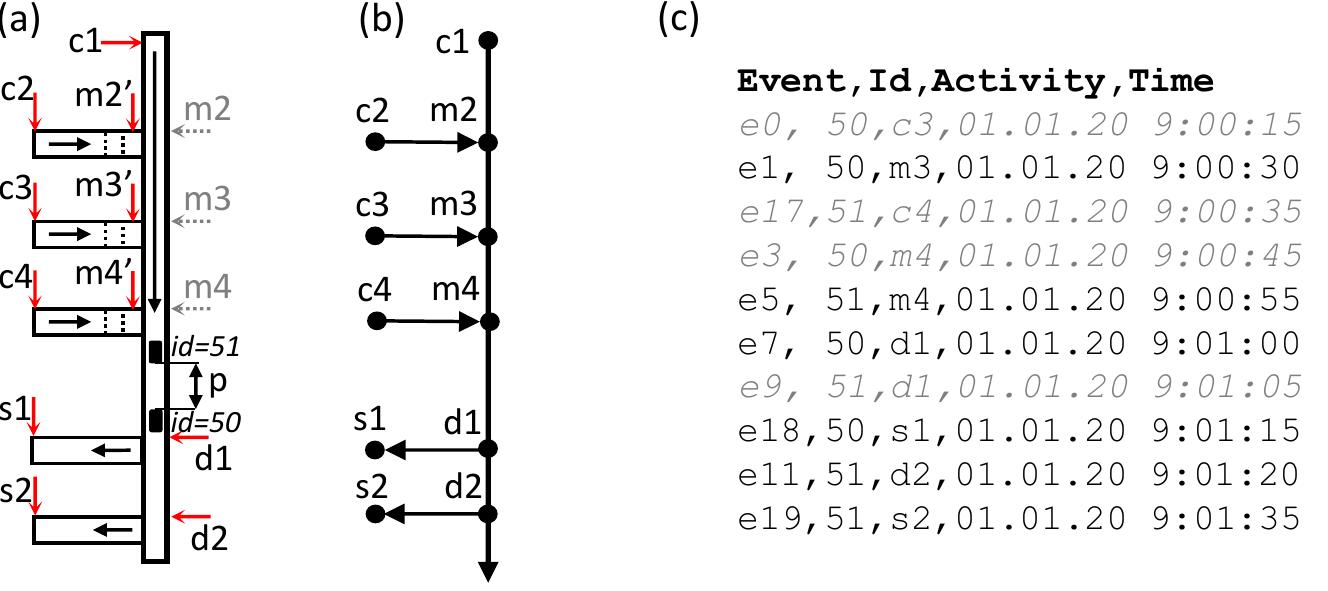}\vspace*{-1mm}
    \caption{A baggage handling system fragment (a) and its material flow diagram (b). Conveyor belts of check-in counters $c1-c4$ merge at points $m2-m4$, further downstream bags can divert at $d1$ and $d2$ to X-Ray security scanners $s1$ and $s2$. Red arrows show sensor (logging) locations. An example of an incomplete event log of the system in (a) is shown in (c), where missing events are shown in the grey color.}
    \label{fig:mfd}
\end{figure}

\begin{figure}
\vspace*{-2mm}
    \centering\includegraphics[width=0.7\textwidth]{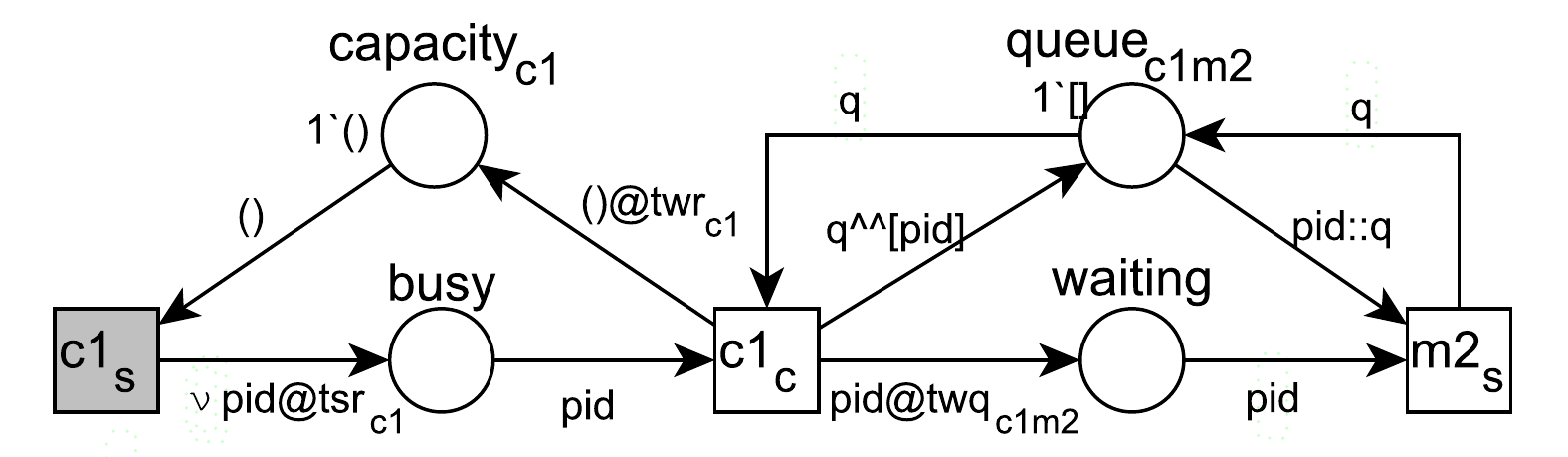}\vspace*{-1mm}
    \caption{Coloured Petri net model of conveyor $c1:m2$ of Fig~\ref{fig:mfd}.}
    \label{fig:cpn_conveyor}
\end{figure}

\medskip
\noindent\textbf{Modeling with Coloured Petri Nets.} Fig.~\ref{fig:cpn_conveyor} shows a coloured Petri Net (CPN) model for the segment from check-in $c1$ to merge step $m2$. In the model, transitions $c1_s$ and $c1_c$ describe \emph{start} and \emph{completion} of the check-in step $c1$. When $c1_s$ occurs, arc inscription $\nu\ pid$ produces a new identifier value for a bag (also called coloured token) on place \emph{busy} and the single token on place $\mathit{capacity}_{c1}$ is removed, i.e., no more resource is available to start $c1$ for another $pid$. By annotation $@tsr_{c1}$, the new $pid$ on $busy$ can only be consumed by transition $c1_c$ (to complete step $c1$) after service-time $tsr_{c1}$ has passed. When $c1_c$ occurs, a token is produced on $\mathit{capacity}_{c1}$ and waiting time $twr_{c1}$ has to pass before $c1_s$ can occur again. Also, the bag identified by $pid$ is removed from $\mathit{busy}$ and inserted into a FIFO queue (modeling a conveyor belt) between the end of $c1$ and the start of $m2$. Arc annotation $q\hat{\ }\hat{\ }[pid]$ specifies that bag $pid$ is appended to the end of queue $q$. Producing $pid$ with time annotation $@twq_{c1m2}$ on place $\mathit{waiting}$ models the minimum time it takes for a bag to travel from $c1$ to $m2$. Only then a bag may leave the queue at transition $m2_s$ where arc annotation $id\mathord{::}q$ specifies that bag $pid$ at the head is removed from the tail of the queue $q$.

The CPN model in Fig.~\ref{fig:cpn_conveyor} describes the impact of limited resource capacity and queues on the progress of a case, but does not model the resource and the queue as entities themselves. This makes it impossible to reason about resource and queue behavior explicitly. To alleviate this, we use proclets.

\medskip
\noindent\textbf{Modeling with Synchronous Proclets.}
A proclet is a Petri net that describes the behavior of a specific entity that can be distinguished through a unique identifier. Interactions between entities are described through synchronization channels between transitions of different proclets~\cite{DFahlandMultiDim}. The synchronous proclet system in Fig.~\ref{fig:bhs} describes the entire BHS of Fig.~\ref{fig:mfd}(a) by using three types of proclets.
\begin{enumerate}
\item The \emph{process proclet} (red dotted border) is a Petri net describing the control-flow perspective of how bags, identified by variable $pid$ may move through the system. It directly corresponds to the MFD of Fig.~\ref{fig:mfd}(b). It is transition-bordered and each occurrence of one of its initial transitions creates a new case identifier (a new value bound to variable $pid$) that was never seen before in the sense of $\nu$-Petri nets~\cite{VelardoF:2008:nu_nets}, see~\cite{DFahlandMultiDim} for details.
\item Each \emph{resource proclet} (green dashed border) models a resource with cyclic behavior as its own entity identified by variable $rid$. For example, the \emph{PassengerToSystemHandover} proclet (top left) identifies a concrete resource by token $rid = c1$; its life-cycle models that starting a task ($c1_s$) makes the resource \emph{busy} and takes service time $tsr_{c1}$, after completing the task ($c1_c$) the resource has waiting time $twr_{c1}$ before being \emph{idle} again in the same way as Fig.~\ref{fig:cpn_conveyor}. Which item the resource is busy with is recorded through variable $pid$ in the pair $(rid,pid)$. In the classical CPN in Fig.~\ref{fig:cpn_conveyor}, the $pid$ is determined by the input and output transition of place \emph{busy}. For the resource proclets in Fig.~\ref{fig:bhs}, \emph{pid} is a free variable at $c1_s$ and $c1_c$ whose value is determined when synchronizing with the corresponding transition in the Process proclet (which we describe below). All other resource proclets follow the same pattern, though some resources such as \emph{MergingUnit-m2} and \emph{DivertingUnit-d1} may have two transitions to become \emph{busy} or \emph{idle}, respectively.
\item Each queue proclet (blue dash-dotted border) describes a FIFO queue as in Fig.~\ref{fig:cpn_conveyor} from the end transition of one task to the start transition of the subsequent task, e.g., from $c1_c$ (end of $c1$) to $m2_s$ (start of $m2$). However, where Fig.~\ref{fig:cpn_conveyor} uses a distinct place $\mathit{queue}_{c1m2}$ for the queue, a queue proclet maintains the queue state (the list) together with the queue identifier $qid$ in place $\mathit{queue}$. Items entering the queue are remember by their $pid$. In the classical CPN in Fig.~\ref{fig:cpn_conveyor}, the $pid$ is determined by consuming from the input place of transition $c1_c$. For the queue proclets in Fig.~\ref{fig:bhs}, \emph{pid} is a free variable at $c1_c$ whose value is determined when synchronizing with the corresponding transition in the Process proclet.
\end{enumerate}
Where the model of Fig.~\ref{fig:cpn_conveyor} only uses identifiers for $pid$ and distinguishes resource and queue through model structure, the resource and queue proclets explicitly model resource and queue identifiers through markings and variables. This will later allow us to relate event data over multiple identifiers to a proclet model and to decompose analysis problems along identifiers.

\medskip
\noindent{}\textbf{Transition Synchronization in Proclets.}
The proclet system synchronizes process, resources, and queues via \emph{synchronous channels} between transitions. A transition linked to a synchronous channel may only occur when all linked transitions are enabled; when they occur, they occur in a single synchronized event. For example, transition $c1_s$ is always enabled in \emph{Process}, generating a new bag id, e.g., $pid=49$, but it may only occur together with $c1_s$ in \emph{PassengerToSystemHandover}, i.e., when resource $c1$ is \emph{idle}, thereby synchronizing the process case for bag $pid=49$ with the resource with identifier $rid = c1$. By storing the pair $(rid,pid) = (c1,49)$ on place \emph{busy}, resource $c1$ is now \emph{correlated} to case $49$. Transition $c1_c$ of the process proclet can now only occur when synchronizing with $c1_c$ of the resource proclet, and thus only for $pid=49$ and $rid=c1$. Moreover, both $c1_c$ transitions can only occur when synchronizing with the $c1_c$ transition of the queue proclet for $qid=c1:m2$, thereby completing the work of $c1$ on $pid=49$ and putting $pid=49$ into the queue. Note that using CPN expressions (as used in queue and resource proclets) eliminates the need for dedicated correlation expressions used for the basic proclet model introduced in~\cite{DFahlandMultiDim}.

\begin{figure}[htbp]
\vspace*{-1mm}
    \includegraphics[height=0.9\textheight]{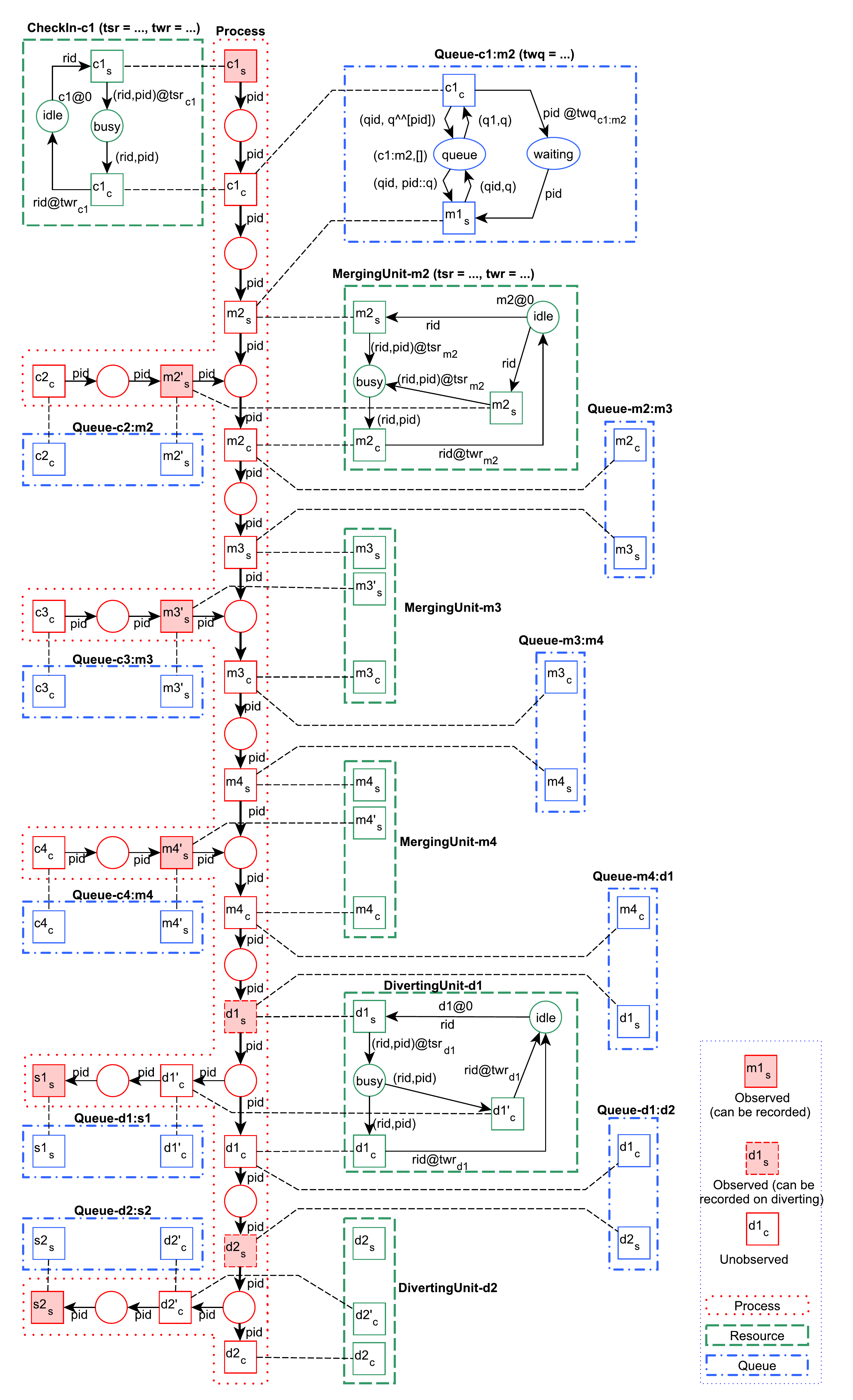}\vspace*{-1mm}
    \caption{The synchronous proclet model of the system shown in Fig.~\ref{fig:mfd}(a) consists of three types of proclets: \textit{Process} for modeling a system layout and process control flow (red, dotted), \textit{Resource} for modeling equipment performing tasks (green, dashed), and \textit{Queue} for modeling conveyors transporting bags in the FIFO order (blue dash-dotted). Only filled transitions can be observed in an event log.}
    \label{fig:bhs}
\end{figure}

\begin{figure}[htbp]
\vspace*{-1mm}
    \includegraphics[height=0.9\textheight]{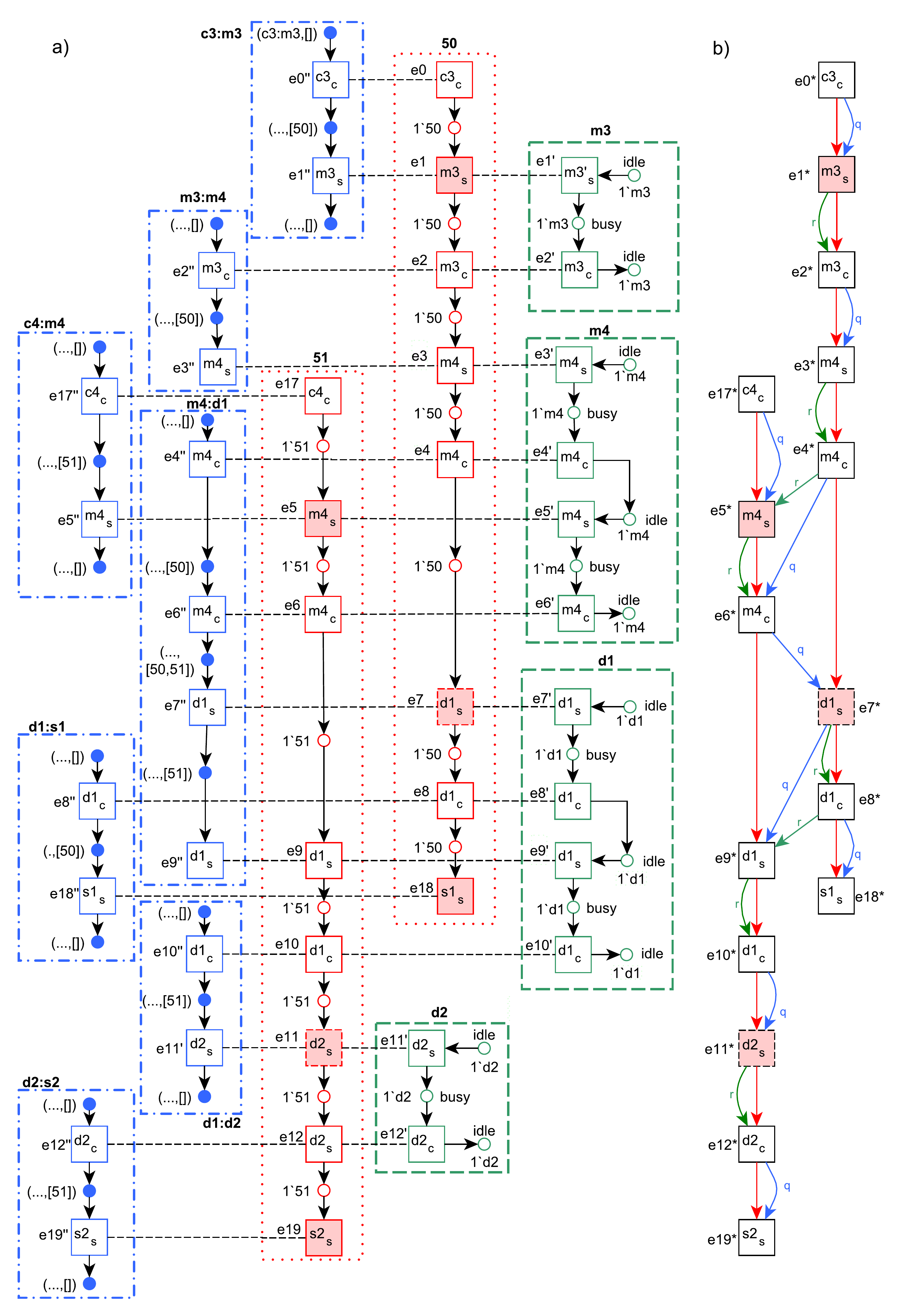}\vspace*{-1mm}
    \caption{Synchronization of multiple sub-runs of the synchronous proclet system in Fig.~\ref{fig:bhs} over shared resources and queues (a), and a global partial order obtained by the union of partial orders of each sub-run (b) for synchronized events, shown by red arrows, green arrows (with marker r) and blue arrows (with marker q) for partial orders $<_{pid}$, $<_{rid}$ and $<_{qid}$ respectively.}
    \label{fig:example}
\end{figure}

In the example, each resource is statically linked to one process step, but the model also allows for one resource to participate in multiple different process steps, and multiple resources to be required for one process step. In the following, we call a proclet system that defines proclets for processes, queues, and resource that are linked via synchronous channels as described above, a \emph{PQR system}.

\medskip
\noindent{}\textbf{Proclets Describe Synchronizing Entity Traces.} We now highlight how the partial-order semantics of synchronous proclets~\cite{DFahlandMultiDim} preserves the identities of process, resources, and queues as ``entity traces''. Figure~\ref{fig:example}(b) shows a partially-ordered run of the PQR system of Fig.~\ref{fig:bhs} for two bags $id=50$ and $id=51$. The run in Fig.~\ref{fig:example}(b) can be understood as a synchronization of multiple runs or traces of the process, resource, and queue proclets, one for each case, resource, or queue involved as shown in Fig.~\ref{fig:example}(a).

Bag $50$ gets inserted via input transition $c3_c$ (event $e_0^*$ in Fig.~\ref{fig:example}(b)). This event is a \emph{synchronization} of events $e0$ ($c3_c$ occurs for bag $50$ in the \emph{Process} proclet) and $e0'$ ($c3_c$ occurs for the \emph{c3:m3} queue) in Fig.~\ref{fig:example}(a). The minimal waiting time $twq_{c3m3}$ must pass before bag $50$ reaches the end of the queue and process step $m3$ can start. The process step $m3$ merges bag $50$ from the check-in conveyor $c3$ onto the main linear conveyor and may only start via transition $m3_s$ when \emph{MergingUnit-m3} is \emph{idle}. As this is the case, bag $50$ leaves the queue  ($e1''$ in \emph{c3:m3}), $m3$ starts merging ($e1'$ in \emph{m3}), the bag starts the merging step (event $e1$ in \emph{Process}), resulting in the synchronized event $e1^*$ in Fig.~\ref{fig:example}(b).

\eject

By $e_1'$, resource \emph{m3} switches from $idle$ to $busy$ and takes time $tsr_{m3}$ before it can complete the merge step with $m3_c$ (event $e2'$) on bag $50$ (event $e2)$; this merge step also inserts bag $50$ into queue \emph{m3:m4} ($e2''$) resulting in synchronized event $e2^*$. Subsequently, bag \emph{50} leaves queue \emph{m3:m4} ($e3^*$) is pushed by merge unit \emph{m4} into queue \emph{m4:d1} ($e4^*$).

\medskip
Concurrently, bag $51$ is inserted via input transition $c4_c$ (event $e17^*$), moves via queue \emph{c4:m4} also to merge unit \emph{m4} to enter queue \emph{m4:d1}, i.e., both bags $50$ and $51$ now compete for merge unit \emph{m4} and the order of entering \emph{m4:d1}. In the run in Fig.~\ref{fig:example}, \emph{m4} executes $m4_s$ and $m4_c$ for bag $51$ ($e5^*$ and $e6^*$) \emph{after completing} this step for bag $50$ ($e3^*$ and $e4^*$). Thus, $51$ enters the queue ($e6^*$) after $50$ entered the queue ($e5^*$) but before $50$ leaves the queue $e7^*$. Consequently, divert unit \emph{d1} first serves $50$ ($e7^*$ and $e8^*$) to reach scanner \emph{s1} ($e{18}^*$) before serving $51$ ($e9^*$ and $e{10}^*$) to reach scanner \emph{s2} ($e19^*$).

Fig.~\ref{fig:example}(b) shows how the process tokens of bag $50$ and $51$ synchronized with the resources and queue tokens along the run, forming sequences or \emph{traces} of events where each of these tokens was involved. For example, bag $50$ followed the trace $e0^*, e1^*,\ldots,e8^*,e{18}^*$ and queue \emph{m4:d1} followed trace $e4^*,e6^*, e7^*, e9^*$ thereby synchronizing with both bag $50$ and bag $51$.

\subsection{Information loss because of incomplete logging}
\label{sec:logging}

Although event data on objects that are tracked can be used for various kinds of data analysis~\cite{AhmedPCL_OnlineRisk,DFA_ICPM2019}, in practice sensors are placed only where it is absolutely necessary for correct operation of the system, e.g., for merge and divert operations, without considering data analysis needs. Applied to our example, only the transitions that are shaded in Fig.~\ref{fig:bhs} would be logged, i.e.,  $c1_s, m2_s', m3_s', m4_s', d1_s, d2_s, s1_s, s2_2$ would be logged from the \emph{control-flow} perspective only. The run of Fig.~\ref{fig:example} would result in a ``typical'' but highly incomplete event log as shown in Fig.~\ref{fig:mfd}(c).

According to this incomplete log, bag $50$ silently passes \emph{m4} and is tracked again only at \emph{d1} ($e7$) and finally at \emph{s1}  ($e{18}$) whereas $51$ silently passes \emph{d1} (as it moves further on the main conveyor) and is tracked again only at \emph{d2} ($e11$). Based on this incomplete information the bags $50$ and $51$ may have traversed \emph{m4:d1} in different orders and at different speeds resulting also in different loads as illustrated in Fig.~\ref{fig:ps}. As a result, in case of congestion, we cannot determine the ordering of cases~\cite{DFA_unbiased_fine_grained}, cannot compute the exact load on each conveyor part for (predictive) process monitoring~\cite{DFA_ICPM2019,Senderovich_DD}. The longer an unobserved path (e.g., $c1 \to d2$), the higher the uncertainty about the actual behavior and the less accurate performance analysis outcome.

Although minimal (or even average) service and waiting times on conveyor belts and resource are known, we need to determine the \emph{order} of all missing events and the possible \emph{intervals of their timestamps} to reconstruct for how long resources were occupied by particular cases and in which order cases were handled, e.g., did $50$ precede $51$ on \emph{m4:d1} or vice versa?

The objective of this paper is to reconstruct from a subset of events logged from the control-flow perspective only the remaining events (including time information), so that the time order is consistent with a partially ordered run of the entire system, including resource and queue proclets.
For example, from the recorded events of the event log in Fig.~\ref{fig:mfd}(c) we reconstruct the remaining events (Fig.~\ref{fig:example}(a)) with time information so that the resulting order (by time) is consistent with the partially ordered run in Fig.~\ref{fig:example}(b).

\section{Modeling system-level runs from event data}
\label{sec:logs}

Having introduced the problem in an informal way in Sect.~\ref{sec:modelling}, we now turn to formalizing it. We first discuss a behavioral model that describes, both, the behavior of all process executions in the system, and how these process executions interact via shared resource. To this end, we develop the notion \emph{system-level runs} and how they canonically emerge from classical event logs that also log the shared resources involved in the process execution.

\medskip
We first recall classical event logs in Sect.~\ref{sec:logs:classical} and their traces. We then formalize \emph{multi-entity event logs}  in Sect.~\ref{sec:logs:multi} where an event can be related to multiple different entities. This allows us to observe behavior along a specific shared resources in the same way as we observe behavior along a case identifiers.

\medskip
To later be able to reason about behavior along multiple entities we then introduce two different but equivalent views on a multi-entity event log that differ in how they explicate behavior over multiple entities.
\begin{itemize}
\item Each multi-entity event log induces a family of classical event logs (one per entity type) that synchronize on shared events; we introduce this view in Sect.~\ref{sec:logs:multi_sequential}.
\item Each multi-entity event log also induces a strict partial order over the events, where events are ordered over time along the same entity. We call this view a \emph{system-level run} and introduce it in Sect.~\ref{sec:logs:multi_spo}.
\end{itemize}
We show in Sect.~\ref{sec:logs:multi_sequential_spo} that all views contain the same information allowing us to switch perspectives on the behavior. We will use the different perspectives when formally stating the problem in Sect.~\ref{sec:pqr} and solving the problem in Sect.~\ref{sec:solution}.

\subsection{Classical event logs}
\label{sec:logs:classical}

We first provide a definition of a classical event log, which we call \emph{single entity event log}. We later generalize this definition to a \emph{multi-entity event log}. With this aim of generalization in mind, we define a single entity event log just as a set of events with attributes.  The cases and traces of an event log will then be derived from event attributes through canonical functions we provide afterwards.

From the usual event attributes of activity, time, and case identifier, only the activity name attribute \emph{act} is mandatory. The \emph{time} attribute is optional as we later want to study situations of incomplete logging. Also the case identifier attribute is optional for the same reason. When we later move to a multi-entity setting the term ``case'' is no longer adequate. We therefore call the case identifer attribute an \emph{entity type} attribute \emph{et}, referring to the type of entity on which the events are recorded (e.g., bags in baggage handling system)
\begin{definition}[Single entity event log]\label{def:event_log_single}
A \emph{single entity event log} $L = (E,\mathit{AN},\mathit{et},\#)$ is a set $E$ of events, a non-empty set $\mathit{AN}$ of attribute names with $time,act \in \mathit{AN}$ and a designated \emph{entity type attribute} $\mathit{et} \in AN$. The partial function $\# : E \times \mathit{AN} \nrightarrow \mathit{Val}$ assigns events $e \in E$ and attribute names $a \in \mathit{AN}$ a value $\#_{a}(e) = v$, so that the activity name $\#_{act}(e)$ is defined for each event $e \in E$.
\end{definition}
We write $\#_a(e) = \perp$ if event $e$ has no value defined for attribute $a$. We call the value $\#_{et}(e) = id$ the entity identifier of $e$ (for entity type $et$). Note that we do not require all events to be correlated to the designated entity type, i.e., $\#_{et}(e)$ can be undefined. Such events will later simply not be part of a case and trace. Events without time stamp $\#_{time}(e) = \perp$ are unordered to all other events. To distinguish event logs where all events are correlated to an entity and are ordered by time stamps, we introduce the following definitions.
\begin{definition}[time-incomplete, time-monotone event log]\label{def:event_log_time-incomplete}\label{def:time-monotone}
We call a single entity-event log $L = (E,\mathit{AN},\mathit{et},\#)$ \emph{time-complete} iff for each $e \in E$ holds $\#_{time}(e) \neq \perp$ and $\#_{et}(e) \neq \perp$, i.e., each event has activity, time, and entity type. Otherwise $L$ is called \emph{time-incomplete}. We call a complete log $L$ \emph{time-monotone} iff for any two $e,e'\in E$ holds if $\#_{et}(e) = \#_{et}(e')$ then $\#_{time}(e) \neq \#_{et}(e')$.
\end{definition}

\begin{table}[h]
\vspace*{-3mm}
\centering
\caption{Event log with multiple entity types pid, rid, qid}\label{tab:event_table_multiid}
{\small\sffamily
\begin{tabular}{lrrrrr}
  \hline
  % after \\: \hline or \cline{col1-col2} \cline{col3-col4} ...
  event id & pid & activity & time & rid & qid \\
  \hline
$e_0$ & 50 & $c3_c$ & 01.01.20 9:00:15 & $\perp$ & c3:m3\\
$e_1$ & 50 & $m3_s$ & 01.01.20 9:00:30 & m3 & c3:m3\\
$e_2$ & 50 & $m3_c$ & 01.01.20 9:00:40 & m3 & m3:m4\\
$e_3$ & 50 & $m4_s$ & 01.01.20 9:00:45 & m4 & m3:m4\\
$e_4$ & 50 & $m4_c$ & 01.01.20 9:00:50 & m4 & m4:d1\\
$e_7$ & 50 & $d1_s$ & 01.01.20 9:01:05 & d1 & m4:d1\\
$e_8$ & 50 & $d1_c$ & 01.01.20 9:01:10 & d1 & d1:s1\\
$e_{18}$ & 50 & $s1_s$ & 01.01.20 9:01:15 & $\perp$ & d1:s1\\
  \hline
$e_{17}$ & 51 & $c4_c$ & 01.01.20 9:00:35 & $\perp$ & c4:m4\\
$e_5$ & 51 & $m4_s$ & 01.01.20 9:00:55 & m4 & c4:m4\\
$e_6$ & 51 & $m4_c$ & 01.01.20 9:01:00 & m4 & m4:d1\\
$e_9$ & 51 & $d1_s$ & 01.01.20 9:01:15 & d1 & m4:d1\\
$e_{10}$ & 51 & $d1_c$ & 01.01.20 9:01:20 & d1 & d1:d2\\
$e_{11}$ & 51 & $d2_s$ & 01.01.20 9:01:25 & d2 & d1:d2\\
$e_{12}$ & 51 & $d2_c$ & 01.01.20 9:01:30 & d2 & d2:s2\\
$e_{19}$ & 51 & $s2_s$ & 01.01.20 9:01:35 & $\perp$ & d2:s2\\
\hline
\end{tabular}}\vspace*{1mm}
\end{table}

Table~\ref{tab:event_table_multiid} shows a single entity event log for entity type $pid$. The log is time-complete and monotone: no two events for the same entity type carry the same time-stamp.

All events with the same value for $\mathit{et}$ are correlated to the same entity or \emph{case}. A \emph{trace} is the sequence of all events in a case ordered by time (and events without time-stamp can be placed anywhere in the sequence).
\begin{definition}[Case, trace, sequential event log]\label{def:case}\label{def:trace}
Let $L = (E,\mathit{AN},\mathit{et},\#)$ be an event log with entity type attribute $\mathit{et}$.

The set of \emph{cases} in $L$ wrt.\ $\mathit{et}$ is $\mathit{et}(L) = \{ \#_{et}(e) \mid e \in E \}$, i.e., all entity (or case) identifier values in $L$.
\eject

All events carrying the same case identifier value $\mathit{id} \in \mathit{et}(L)$ are \emph{correlated to $\mathit{id}$}, i.e., $\mathit{corr}(L,\mathit{et}=\mathit{id}) = \{ e \in E \mid \#_{et}(e) = id \}$.

A \emph{trace} of case $id$ is a sequence $\langle e_1,\ldots,e_n \rangle$ of all events correlated to $\mathit{id}$ that preserves time, i.e., $\mathit{corr}(L,\mathit{et}=\mathit{id}) = \{e_1,\ldots,e_n\}$ and for all $1 \leq i < j \leq n$ hold if $\#_{time}(e_i) \neq \perp$ and $\#_{time}(e_j) \neq \perp$ then $\#_{time}(e_i) \leq \#_{time}(e_j)$. If two events have the same timestamp, a case $id$ has more than one trace; we write $\sigma(L,\mathit{et}=id)$ for the set of traces of case $id$.

A \emph{sequential event} log of $L$ is a set $\sigma(L,\mathit{et})$ which contains for each $\mathit{id} \in \mathit{et}(L)$ exactly one trace $\sigma \in \sigma(L,\mathit{et}=id)$.
\end{definition}
For a time-incomplete event log $L$, the notion of trace and sequential event log are non-deterministic, i.e., an event without time stamp can be placed at an arbitrary position in the trace, allowing for multiple different traces for the same case. Only in a monotone event log $L$, each case $id$ has a unique trace $\{ \sigma_{et}^{id} \} = \sigma(L,\mathit{et}=\mathit{id})$ and the log $\sigma(L)$ is uniquely defined. We then write $\sigma_{et}^{id} = \sigma(L,\mathit{et}=\mathit{id})$.

Table~\ref{tab:event_table_multiid} shows a time-complete, monotone single entity event log for entity type $pid$ defining cases $\{50,51\}$ and traces $\sigma(L,pid,50) = \sigma_{pid}^{50} = \langle e_0,e_1,\ldots,e_8,e_{18} \rangle$ and $\sigma_{pid}^{51} = \langle e_{17},e_5,e_6,e_9,\ldots,e_{12},e_{19} \rangle$. This classical interpretation of the events in Table~\ref{tab:event_table_multiid} describes how bags $50$ and $51$ travel through the baggage handling system of Fig.~\ref{fig:mfd}(a) from check-in $c3_c$ and $c4_c$ to scanners $s1_s$ and $s2_s$.

\subsection{Event logs over multiple entities}
\label{sec:logs:multi}

In the classical single-entity event logs of Sect.~\ref{sec:logs:classical}, attributes $rid$ and $qid$ of Table~\ref{tab:event_table_multiid} are considered so-called \emph{event attributes}~\cite{Aalst_2016_pm_book} which describe the event further, i.e., event $e_6$ of bag $51$ at $m4_c$ was performed by resource $\#_{rid}(e_6) = m4$ (merge-unit 4) as the bag entered the conveyor belt $\#_{qid}(e_6) = m4:d1$ from merge-unit 4 to divert-unit 1.

However, attributes $rid$ and $qid$ do refer to system entities in their own right: the machines that perform the various activities on the bags, and the conveyor belts that move bags between activities and machines. These machines and conveyor belts exist beyond individual cases and occur also in other cases, e.g., $\#_{rid}(e_3) = m4, \#_{rid}(e_4) = m4:d1$ for bag $50$. To study how these shared resources (e.g., machines, conveyor belts) relate and order bags over time, we introduce the notion of a \emph{multi-entity event log} which designates multiple entity type attributes.
\begin{definition}[Multi-entity event log]\label{def:event_log_multiid}
An \emph{event log with multiple entity types} $L = (E,\mathit{AN},\mathit{ET},\#)$ is a set $E$ of events, a non-empty set $\mathit{AN}$ of attribute names with $act \in \mathit{AN}$ and a subset $\emptyset \neq \mathit{ET} \subset \mathit{AN}$ is designated as \emph{entity types}. Partial function $\# : E \times \mathit{AN} \nrightarrow \mathit{Val}$ assignings events $e \in E$ and attribute names $a \in \mathit{AN}$ a value $\#_{a}(e) = v$, so that the activity name $\#_{act}(e)$ is defined for each event $e \in E$.
\end{definition}
Table~\ref{tab:event_table_multiid} shows a monotone event log with multiple entity types $\mathit{ET} = \{pid, rid, qid\}$. In contrast to a single-entity log (Def.~\ref{def:event_log_single}), an event in a multi-entity log (Def.~\ref{def:event_log_multiid}) may carry more than one entity type $\#_{et}(e) \neq\perp, et \in \mathit{ET}$, e.g., $\#_{pid}(e_0) = 50, \#_{qid}(e_0) = c3:m3$. As for Def.~\ref{def:event_log_single}, an entity type may be undefined or one or multiple events, e.g., $\#_{rid}(e_0) = \perp$.

Note that a multi-entity event log $L = (E,\mathit{AN},\mathit{ET},\#)$ with a singleton set of identifiers $\mathit{ET} = \{ \mathit{et} \}$ coincides with a classical event log (Def.~\ref{def:event_log_single}). The notions of time-incomplete, complete, and monotone event log lift to multi-entity event logs by applying them on all entity types in $\mathit{ET}$.

We next introduce two views that materialize the behavioral information in a multi-entity event log: as sets of sequential event logs and as a partial order.

\subsection{Sequential view on event logs over multiple entities}
\label{sec:logs:multi_sequential}

We use sequential traces (Def.~\ref{def:trace}) to describe the behavior stored in a single-entity event log $L$ in an explicit form. Each trace in $\sigma(L,id)$ describes a possible sequences of activity executions over time for entity $id \in \mathit{et}(L)$; no two traces $\sigma(L,\mathit{endid},id_1)$ and $\sigma(L,\mathit{endid},id_2)$ share an event.

\medskip
We now discuss how to materialize such sequential information from a multi-entity event log $L$ for \emph{all} entity types $\mathit{ET}$. First, we canonically derive a set of sequential event logs of $L$, one per entity type $\mathit{et} \in \mathit{ET}$. In Sect.~\ref{sec:logs:multi_spo} we discuss an alternative view based on partial orders.

Note that the functions $\mathit{et}(L)$, $\mathit{corr}(L,\mathit{et},id)$, $\sigma(L,\mathit{et},id)$ of Def.~\ref{def:trace} are well-defined over multi-entity event logs.
\begin{definition}[Sequential views on multi-entity event log]\label{def:event_log_multiid:seq_view}
Let $L = (E,\mathit{AN},\mathit{ET},\#)$ be a multi-entity event log.

A \emph{sequential event log} of $L$ for entity type $\mathit{et} \in \mathit{ET}$ is a set $\sigma(L,\mathit{et})$ containing exactly one trace $\sigma \in \sigma(L,\mathit{et},id)$ for each case $id \in \mathit{et}(L)$ of $\mathit{et}$ (see Def.~\ref{def:trace}).

A \emph{sequential view} on $\,L\,$ is a family $\,\langle \sigma(L,\mathit{et}) \rangle_{\mathit{et} \in \mathit{ET}}\,$ of sequential event logs\,--\,one per entity type in~$L$.
\end{definition}
As for sequential event logs, if $L$ is monotone, then the sequential event log $\sigma(L,\mathit{et})$ of an entity is unique, and the sequential view on $L$ is unique.

The sequential view on the monotone multi-entity event log in Tab.~\ref{tab:event_table_multiid} has three sequential event logs $\sigma(L,pid)$, $\sigma(L,qid)$, $\sigma(L,rid)$. Thereby the $pid$-log $\sigma(L,pid)$ is the same as for the single-entity event log. It describes the behavior along the classical case identifier, i.e., one trace per bag in the system.

Log $\sigma(L,rid)$ has cases $m3,m4,d1,d2$ and, among others, traces $\sigma(L,rid,m4) = \sigma_{rid}^{m4} = \langle e_3,e_4,e_5,e_6 \rangle$ and $\sigma_{rid}^{d1}  = \langle e_7,e_8,e_9,e_{10} \rangle$. These traces describe the order in which each machines was used. Note that $e_3,e_4,e_7,e_8 \in \mathit{corr}(L,pid,50)$ while $e_5,e_6,e_9,e_{10} \in \mathit{corr}(L,pid,51)$. That is, traces $\sigma_{rid}^{m4}$ and $\sigma_{rid}^{d1} $ for $rid$ contain events from different bags, i.e., the $rid$-traces go ``across'' multiple different $pid$-traces.

Log $\sigma(L,qid)$ has cases $c3:m3,m3:m4,c4:m4,m4:d1,d1:s1,d1:d2,d2:s2$ and, among others, trace $\sigma_{qid}^{m4:d1} = \langle e_4,e_6,e_7,e_9 \rangle$ while $e_4,e_7 \in \mathit{corr}(L,pid,50)$ and $e_6,e_9 \in \mathit{corr}(L,pid,51)$. These traces describe the order in which different $pid$-cases entered and left the queues.

Note that per sequential event log, each event occurs in only one trace, but the same event can be part of multiple different event logs (for different entity types), e.g., $e_4$ occurs in $\sigma_{pid}^{50}$, $\sigma_{rid}^{m4}$, and $\sigma_{qid}^{m4:d1}$. In this way, the $rid$- and $qid$-traces describe how different $pid$-traces are synchronized via shared machines ($rid$) and conveyor belts ($qid$). However, that synchronization of multiple traces is implicit in the sequential view. We therefore propose a partially-ordered view on a multi-entity event log next.

\subsection{Partially ordered view on event logs over multiple entities}
\label{sec:logs:multi_spo}

In Sect.~\ref{sec:logs:multi_sequential}, we used $\#_{et}(.)$ to derive sequences of events related to the same entity ordered by $\#_{time}(.)$. We next encode the same information in an ordering relation over events, which is a strict partial order due to the monotonicity of the $\#_{time}(.)$ values. We thereby start by first ordering events $e_1 < e_2$ only if they are related to the same entity. The transitive closure then naturally extends this ordering across multiple different entities.

%To later facilitate inferring missing time stamps, we make the attribute $\#_{time}(.)$ which defines the order a parameter for deriving the partial order $<$ over events. We call an attribute $\tau \in \mathit{AN}$ an \emph{ordering} attribute of a log $L$ if there is a total order $<$ that orders any two values of $\#_{\tau}(e)$, i.e. $\#_{\tau}(e) < \#_{\tau}(e')$ or $\#_{\tau}(e) = \#_{\tau}(e')$ or $\#_{\tau}(e) < \#_{\tau}(e')$ for all events $e$; for instance the natural, rational, or real numbers.
%
\begin{definition}[Partial-order view, system-level run]\label{def:event_log_multiid:po_view}\label{def:precedes}\label{def:system-level_run}
Let $L = (E,\mathit{AN},\mathit{ET},\#)$ be a monotone multi-entity event log.

\medskip
Let $\mathit{et} \in \mathit{ET}$ and $\mathit{id} \in \mathit{et}(L)$. Event $e_1 \in E$ \emph{precedes} event $e_2 \in E$ in entity $id$ of type $et$, written $e_1 <_{et}^{id} e_2$ iff
\begin{enumerate}
  \item $\perp \neq \#_{time}(e_1) < \#_{time}(e_2) \neq \perp$ (the time stamp of $e_1$ is before the time stamp of $e_2$), and
  \item $\#_{et}(e_1) = \#_{et}(e_2) = id$ (both events are related to the same entity $id$).
\end{enumerate}
$e_1$ \emph{directly precedes} $e_2$ in entity $id$ of type $et$, written $e_1 \lessdot_{et}^{id} e_2$, iff there exists no $e' \in E$ with $e_1 <_{et}^{id} e' <_{et}^{id} < e_2$.

\medskip
This ordering lifts to entity types and entire $L$:
\begin{itemize}
\item $e_1$ \emph{directly precedes} $e_2$ in entity type $et$, written $e_1 \lessdot_{et} e_2$, iff $e_1 \lessdot_{et}^{id} e_2$ for some $id \in et(L)$; and
\item $e_1$ \emph{directly precedes} $e_2$, written $e_1 \lessdot e_2$, iff $e_1 \lessdot_{et} e_2$ for some $et \in \mathit{ET}$.
\item The transitive closures $(\lessdot_{et})^+ = <_{et}$ and $(\lessdot)^+ = <$ define \emph{(indirectly) precedes} per entity type and for all events in $L$, respectively.
\end{itemize}
The partial-order view on $L$ or \emph{system-level run} of $L$ (induced by $\tau$) is $\pi = (E, <,\mathit{AN},\mathit{ET},\#)$.
\end{definition}
Figure~\ref{fig:example}(b) visualizes the directly precedes relations $\lessdot_{pid}, \lessdot_{rid}, \lessdot_{qid}$ induced by $\#_{time}$ for the multi-entity event log in Tab.~\ref{tab:event_table_multiid}. This behavioral model shows that events of different process cases ($pid=50$ and $pid=51$) are independent under the classical control-flow perspective $<_{pid}$, e.g., $e4 \not<_{pid} e5 \not<_{pid} e7$ (see Def.~\ref{def:precedes}), but mutually depend on each other under $<_{rid}$ and $<_{qid}$, e.g., $e4 <_{rid} e5 <_{rid} e6$ and $e6 <_{qid} e7$.

Note that events without defined time stamp are unordered to all other events, i.e., they can occur at any time. We will exploit this when inferring missing time stamps.

The order $<$ is indeed a strict partial order.
\begin{lemma}
Let $L = (E,\mathit{AN},\mathit{ET},\#)$ be a monotone multi-entity event log. Let $\pi = (E, \mathord{<},\mathit{AN},\mathit{ET},\#)$ be the system-level run of $L$. Then $(E,<)$ is a strict partial order.
\end{lemma}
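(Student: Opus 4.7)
To show that $(E,<)$ is a strict partial order I need to verify (i) transitivity and (ii) irreflexivity; asymmetry then follows from these two. Transitivity is essentially free from the construction: by Definition~\ref{def:system-level_run} the relation $<$ is the transitive closure $(\lessdot)^+$ of the direct-precedence relation, and the transitive closure of any binary relation is transitive. So the real content lies in irreflexivity.

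The plan for irreflexivity is to establish a single time-monotonicity lemma for $<$:
\begin{equation*}
  \text{if } e_1 < e_2, \text{ then } \#_{time}(e_1) \text{ and } \#_{time}(e_2) \text{ are both defined and } \#_{time}(e_1) < \#_{time}(e_2).
\end{equation*}
First I would unfold the definitions: every base step $e_1 \lessdot_{et}^{id} e_2$ of the direct-precedence relation requires, by clause~(1) of Definition~\ref{def:precedes}, that both $\#_{time}(e_1)$ and $\#_{time}(e_2)$ are defined and that $\#_{time}(e_1) < \#_{time}(e_2)$ as real numbers. Since $\lessdot_{et} = \bigcup_{id} \lessdot_{et}^{id}$ and $\lessdot = \bigcup_{et} \lessdot_{et}$, the same property holds for $\lessdot$. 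A straightforward induction on the length of a $\lessdot$-chain, using transitivity of $<$ on $\mathbb{R}$, lifts the property from the base relation to its transitive closure $<$, yielding the lemma.

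From this lemma, irreflexivity is immediate. If we had $e < e$ for some $e \in E$, then by the lemma $\#_{time}(e) < \#_{time}(e)$ in $\mathbb{R}$, which is impossible; hence no such $e$ exists. Moreover, any event $e$ with $\#_{time}(e) = \perp$ cannot participate in a $\lessdot$-step on either side, so it is trivially $<$-incomparable to itself and to every other event, which is consistent with irreflexivity.

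\textbf{Expected obstacle.} There is no genuine technical obstacle here\,--\,the statement is essentially bookkeeping once the definitions are unfolded. The only point requiring care is to resist using the time-monotonicity hypothesis of the log (Definition~\ref{def:time-monotone}) where it is not needed: monotonicity guarantees that each entity's trace $\sigma_{et}^{id}$ is uniquely determined (so that the induced $<$ is well-defined without ambiguity), but the strict-partial-order properties themselves rely only on clause~(1) of Definition~\ref{def:precedes} forcing strict time increase along every $\lessdot$-edge. Keeping this separation clear is the main thing to get right in the write-up.
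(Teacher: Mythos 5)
Your proposal is correct and follows essentially the same route as the paper: transitivity is immediate because $<$ is defined as the transitive closure $(\lessdot)^+$, and irreflexivity follows from the strict time increase along every $\lessdot$-edge. The only (minor) difference is that you make the lifting to the transitive closure explicit via induction and observe that clause~(1) of the precedes-definition alone forces strict time increase, whereas the paper's proof additionally invokes monotonicity of $L$ at this point; your sharper bookkeeping is consistent with, and if anything slightly cleaner than, the paper's argument.
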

\begin{proof}
We have to show that $<$ is transitive and irreflexive. $< = (\lessdot)^+$ is transitive by construction in Def.~\ref{def:event_log_multiid:po_view}. Regarding irreflexivity: $e_1 \lessdot e_2$ holds only if $\#_{\tau}(e_1) < \#_{\tau}(e_2)$. As $L$ is monotone, either $\#_{\tau}(e_1) < \#_{\tau}(e_2)$ or $\#_{\tau}(e_2) < \#_{\tau}(e_1)$ holds (Def.~\ref{def:time-monotone}) but not both, hence $<$ is irreflexive.
\end{proof}

\subsection{Relation between sequential and partially-ordered view}
\label{sec:logs:multi_sequential_spo}

To better define and solve the problem, we now establish a more explicit relation between the system-level run $\pi$ of $L$ and the traces in the sequential view of $L$.

\medskip
Given a system-level run $\pi = (E, <,\mathit{AN},\mathit{ET},\#)$ we write $\pi_{et}$ for the projection of $\pi$ onto entity type $et \in \mathit{ET}$ where $\pi_{et} = (E_{et},<_{et},\mathit{AN},\{et\},\#)$ contains only the events $E|_{et} = \{ e \in E \mid \#_{et}(e) \neq \perp \}$ related to $\mathit{et}$. Relation $<_{et}$ is already well-defined wrt. $E|_{et}$. We call $\pi_{et}$ the entity-type level run of $L$ for entity type $et$.

Correspondingly, given an identifier $id \in et(L)$, the projection $\pi_{et}^{id} = (E_{et}^{id},<_{et}^{id},\mathit{AN},\{et\},\#)$ contains only the events $E|_{et}^{id} = \mathit{corr}(L,et=id)$ of $id$.  We call $\pi_{et}^{id}$ the entity-level run of $L$ of entity $id$ of type $et$.

From the system-level run in Figure~\ref{fig:example}(b) we can obtain the entity-level runs $\pi_{pid}^{50}$ and $\pi_{pid}^{51}$ from the perspective of the process, $\pi_{rid}^{m3}, \pi_{rid}^{m4}, \pi_{rid}^{d1}, \pi_{rid}^{d2}$ from the perspective of the resources, and $\pi_{qid}^{c3:m3},\pi_{qid}^{m3:m4},\pi_{qid}^{c4:m4},\pi_{qid}^{m4:d1},\pi_{qid}^{d1:d2}, \pi_{qid}^{d1:s1},\pi_{qid}^{d2:s2}$ from the perspective of the conveyor belts (or queues).

Each entity-level run $\pi_{et}^{id}$ corresponds to a sequential trace $\sigma_{et}^{id}$ in the sequential view of $L$ because either view derives the direct precedence/succession of events from the same principles.
\begin{lemma}
Let $L = (E,\mathit{AN},\mathit{ET},\#)$ be a monotone multi-entity event log. Let $\pi = (E, \mathord{<},\mathit{AN},\mathit{ET},\#)$ be the system-level run of $L$.

For all $e_1,e_2 \in E$ holds: $e_1 \lessdot e_2$ iff there exists $et \in \mathit{ET} $ and $id \in et(L)$ so that $\langle \ldots, e_1,e_2, \ldots \rangle = \sigma(L,et=id)$ is a trace in the sequential view $\langle \sigma(L,et) \rangle_{et \in \mathit{ET}}$ of $L$.
\end{lemma}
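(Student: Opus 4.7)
The plan is to prove both directions by unfolding the definitions of $\lessdot$ on the partial-order side and of the sequential trace $\sigma(L, et{=}id)$ on the other side, and to observe that both sides boil down to the same ordering principle: adjacency with respect to $\#_{\mathit{time}}$ restricted to events correlated with a common entity. Monotonicity of $L$ will make this work cleanly, because it guarantees (i) that every event has a defined time stamp and entity value for each type in which it participates (time-complete), and (ii) that distinct events correlated to the same $id$ carry distinct time stamps, so the trace $\sigma(L, et{=}id)$ is uniquely determined as the strict ordering of $\mathit{corr}(L, et{=}id)$ by $\#_{\mathit{time}}$.

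For the forward direction, I would start from $e_1 \lessdot e_2$ and unfold: by Def.~\ref{def:event_log_multiid:po_view} there exists $et \in \mathit{ET}$ with $e_1 \lessdot_{et} e_2$, hence some $id \in et(L)$ with $e_1 \lessdot_{et}^{id} e_2$. By Def.~\ref{def:precedes} this means $e_1, e_2 \in \mathit{corr}(L, et{=}id)$, $\#_{\mathit{time}}(e_1) < \#_{\mathit{time}}(e_2)$, and no $e' \in E$ satisfies $e_1 <_{et}^{id} e' <_{et}^{id} e_2$. Since $L$ is monotone, $\sigma(L, et{=}id) = \langle f_1, \dots, f_n \rangle$ is the unique time-sorted enumeration of $\mathit{corr}(L, et{=}id)$, so $e_1 = f_i$ and $e_2 = f_j$ for some $i < j$. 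If $j > i+1$, then $f_{i+1}$ would satisfy $e_1 <_{et}^{id} f_{i+1} <_{et}^{id} e_2$, contradicting directness. Hence $j = i+1$ and $e_1, e_2$ appear adjacent in $\sigma(L, et{=}id)$.

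For the backward direction, assume $\sigma(L, et{=}id) = \langle \dots, e_1, e_2, \dots \rangle$ with $e_1, e_2$ adjacent. By Def.~\ref{def:trace}, both events are in $\mathit{corr}(L, et{=}id)$, so $\#_{et}(e_1) = \#_{et}(e_2) = id$, and the sequence respects time, so $\#_{\mathit{time}}(e_1) \le \#_{\mathit{time}}(e_2)$; by monotonicity the inequality is strict. Thus $e_1 <_{et}^{id} e_2$. If some $e'$ satisfied $e_1 <_{et}^{id} e' <_{et}^{id} e_2$, then $e' \in \mathit{corr}(L, et{=}id)$ would appear strictly between $e_1$ and $e_2$ in the unique sorted trace, contradicting adjacency. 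Hence $e_1 \lessdot_{et}^{id} e_2$, which yields $e_1 \lessdot_{et} e_2$ and therefore $e_1 \lessdot e_2$.

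The only subtle step is the appeal to uniqueness of $\sigma(L, et{=}id)$; without monotonicity, events with equal time stamps could produce several valid traces and the equivalence between ``adjacent in some trace'' and ``directly precedes'' could break at ties. Since monotonicity is assumed in the hypothesis, this is where I would be most explicit, citing Def.~\ref{def:time-monotone} to justify that $\sigma(L, et{=}id)$ is the unique strict linearization of $\mathit{corr}(L, et{=}id)$ by $\#_{\mathit{time}}$; the rest is bookkeeping of definitions. I do not expect any genuinely hard step beyond this.
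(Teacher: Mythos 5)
Your proposal is correct and follows essentially the same route as the paper's proof: unfolding the definitions of $\lessdot$ and of $\sigma(L,\mathit{et}{=}\mathit{id})$, and using monotonicity to identify adjacency in the unique time-sorted trace with direct precedence. You merely spell out the converse direction and the indexing argument that the paper dismisses with ``the converse holds by the same arguments.''
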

\begin{proof}
If $e_1 \lessdot e_2$ then by Def.~\ref{def:event_log_multiid:po_view}, $e_1 \lessdot_{et}^{id} e_2$ for some $et \in \mathit{ET} $ and $id \in et(L)$. Thus, $\#_{et}(e_1) = \#_{et}(e_2)$ and $\#_{time}(e_1) < \#_{time}(e_2)$ (by Def.~\ref{def:event_log_multiid:po_view} and $L$ being monotone). By Def.~\ref{def:trace}, $e_1,e_2 \in \mathit{corr}(L,et=id)$ (correlated into the same case $id$ for $et$). Further, because there is no $e' \in \mathit{corr}(L,et=id)$ with $\#_{time}(e_1) < \#_{time}(e') < \#_{time}(e_2)$ (definition of $\lessdot$ in Def.~\ref{def:event_log_multiid:po_view}), $e_1$ and $e_2$ are ordered next to each other in the sequential trace $\langle \ldots, e_1,e_2, \ldots \rangle = \sigma(L,et=id)$. The converse holds by the same arguments.
\end{proof}
\begin{corollary}\label{cor:sequence_po_equivalence}
Let $L = (E,\mathit{AN},\mathit{ET},\#)$ be a monotone multi-entity event log. Let $\pi = (E, \mathord{<},\mathit{AN},\mathit{ET},\#)$ be the system-level run of $L$. For any $\pi_{et}^{id}$ for $et \in \mathit{ET}, id \in et(L)$ holds $e_1 \lessdot_{et}^{id} e_2$ iff $\langle \ldots, e_1,e_2, \ldots \rangle = \sigma(L,et=id)$ and $e_i <_{et}^{id} e_j$ iff $\langle \ldots, e_i, \ldots, e_j, \ldots \rangle = \sigma(L,et=id)$.
\end{corollary}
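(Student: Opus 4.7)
The plan is to derive the corollary from the preceding lemma by restricting its reasoning to a single entity $(et,id)$ and then lifting the direct-precedence characterization to its transitive closure $<_{et}^{id}$.

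First I would establish the adjacency statement: $e_1 \lessdot_{et}^{id} e_2$ iff $e_1$ and $e_2$ appear consecutively in $\sigma(L,et=id)$. This is essentially the preceding lemma specialized to a fixed pair $(et,id)$. For the forward direction, $e_1 \lessdot_{et}^{id} e_2$ forces $\#_{et}(e_1)=\#_{et}(e_2)=id$ and $\#_{time}(e_1)<\#_{time}(e_2)$ together with the absence of any time-intermediate $e'$ also correlated to $id$. Because $L$ is monotone, Def.~\ref{def:trace} produces the unique time-ordered enumeration $\sigma(L,et=id)$ of $\mathit{corr}(L,et=id)$, and the absence of a time-intermediate event forces $e_1$ and $e_2$ into adjacent positions of that trace. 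For the converse, adjacency in this unique trace directly supplies $\#_{time}(e_1)<\#_{time}(e_2)$ with no event of $id$ in between, matching the definition of $\lessdot_{et}^{id}$.

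Next I would extend the equivalence to the transitive closure. By Def.~\ref{def:event_log_multiid:po_view} we have $<_{et}^{id} = (\lessdot_{et}^{id})^+$, so $e_i <_{et}^{id} e_j$ exactly when there is a chain $e_i = f_0 \lessdot_{et}^{id} f_1 \lessdot_{et}^{id} \cdots \lessdot_{et}^{id} f_n = e_j$ with $n \geq 1$. Applying the adjacency characterization to each link shows that $f_0,f_1,\ldots,f_n$ appear consecutively in $\sigma(L,et=id)$, so $e_i$ precedes $e_j$ in that trace. Conversely, if $e_i$ appears before $e_j$ in $\sigma(L,et=id)$, enumerating the intermediate trace entries yields a chain of $\lessdot_{et}^{id}$ links between them, hence $e_i <_{et}^{id} e_j$. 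A short induction on the number of intermediate entries formalizes either direction.

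The main obstacle is essentially bookkeeping: one must observe that the preceding lemma is phrased for the global $\lessdot$ together with the full family of traces of the sequential view, whereas the corollary speaks of the per-entity relation $\lessdot_{et}^{id}$ and a single trace. Unfolding the chain $\lessdot \leftarrow \lessdot_{et} \leftarrow \lessdot_{et}^{id}$ from Def.~\ref{def:event_log_multiid:po_view} makes clear that the per-entity statement is exactly the $(et,id)$-component of the global one, so no new behavioral reasoning is required beyond invoking monotonicity of $L$ to guarantee uniqueness of $\sigma(L,et=id)$.
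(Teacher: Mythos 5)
Your proof is correct and follows essentially the route the paper intends: the corollary is stated as an immediate consequence of the preceding lemma, and you obtain it exactly in that spirit by specializing the lemma's adjacency argument to a fixed pair $(et,id)$ (using monotonicity for uniqueness of $\sigma(L,et=id)$) and then lifting along chains of direct-precedence links to the full trace ordering. One minor note: Def.~\ref{def:event_log_multiid:po_view} defines $<_{et}^{id}$ directly via a shared identifier and timestamp order rather than as $(\lessdot_{et}^{id})^+$, but for a finite monotone log the two relations coincide, so your transitive-closure step goes through unchanged.
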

The above relation may not seem profound and be more a technical exercise. However, we benefit in the next sections from being able to change perspectives at will and study (and operate on) behavior as a classical sequence (and use sequence reasoning for a single entity) as well as a partial order (and use partial order reasoning across different entities).

\medskip
For instance, the directly precedes relations $\lessdot_{pid}, \lessdot_{rid}, \lessdot_{qid}$ visualized in Figure~\ref{fig:example}(b) directly define the sequences of events we find in $\sigma(L,pid)$, $\sigma(L,rid)$, and $\sigma(L,qid)$.

%Each sub-run $\pi_{id}^z$ is a ``proper'' run of case $z$ in the corresponding proclet (Lemma 2 in~\cite{DFahlandMultiDim}).

\section{Modeling multi-entity behavior with queueing and time}
\label{sec:pqr}

We introduced system-level runs and multi-entity event logs in Sect.~\ref{sec:logs}. We now want to formally state the problem of inferring missing time-stamps from events logs which only recorded partial information about the system. Thereby the gap between ``partial'' and ``complete'' information depends on domain knowledge. We therefore first introduce a model for describing such domain knowledge about system-level behavior and state the formal problem afterwards.

\medskip
Our model uses synchronous proclets~\cite{DFahlandMultiDim} to describe a system as a composition of multiple smaller components (each called a proclet) that synchronize dynamically on transition occurrences. Which transitions synchronize is specified in channels. The original definition~\cite{DFahlandMultiDim} is based on Petri nets with identifiers. To model queueing and time, we extend the synchronous proclet model with concepts of coloured Petri nets (CPN).

We first recall some basic syntax of colored Petri nets in Sect.~\ref{sec:cpn_syntax} and then formulate a \emph{replay semantics} to replay an event log over a CPN in Sect.~\ref{sec:cpn_replay}. This replay semantics allows us to define conformance checking problems between a CPN and a multi-entity event log.

We then lift this definition of CPN replay semantics to CPN proclet systems where multiple proclets (each defined by a CPN) synchronize on joint transition firings in Sect.~\ref{sec:pqr:cpn_proclets}; the syntax defined there extend the basic synchronous proclet model~\cite{DFahlandMultiDim} with concepts for data and time.

We then consider a specific class of CPN proclet systems which describe a single process with shared resources and queues. We call such a proclet system a \emph{PQR system}. We introduce PQR systems in Sect.~\ref{sec:pqr:systems} and then formally state our research problem of repairing incomplete event logs in Sect.~\ref{sec:pqr:problem_statement}.

\subsection{Background on coloured Petri nets}
\label{sec:cpn_syntax}

We here only recall the CPN notation and semantic concepts also needed in the remainder of this paper and do not introduce the entire formal model CPN; refer to~\cite{DBLP:journals/cacm/JensenK15} for an introduction and further details.

\medskip
A \emph{labeled coloured Petri net} (CPN) $N = (P,T,F,\Sigma,\ell,\mathit{Var},\mathit{Types},\mathit{colSet},m_0,\mathit{arcExp},\mathit{arcTime})$ defines
\begin{itemize}
\item a \emph{skeleton Petri net} $(P,T,F)$ of places $P$, transitions $T$, and arcs $F$ as usual; we write $\PNpre{t}$ and $\PNpost{t}$ for the pre- and post-places of transition $t$ and $\PNpre{p}$ and $\PNpost{p}$ for the pre- and post-transitions of place $p$;
\item a \emph{labelling function} $\ell : T \to \Sigma$ assigning each transition a name $a \in \Sigma$ from an alphabet $\Sigma$;
\item a set of \emph{variable} names $\mathit{Var}$ and a set of data $\mathit{Types}$;
\item \emph{color sets} (i.e., data types) $\mathit{colSet} : P \cup \mathit{Var} \to \mathit{Types}$ specifying for each place and variable which values it can hold;
\item an \emph{initial marking} $m_0 : P \to 2^{\mathit{Values} \times \mathds{R}}$ assigning to each place a multiset of value-time pairs $(v,time)$ so that $v \in \mathit{colSet}(p), p \in P$;
\item \emph{arc expressions} $\mathit{arcExp} : F \to \mathit{Exp}$ defining for each arc an expression over $\mathit{Var}$ and various operators, specifying which values to consume/produce; and
\item a \emph{time annotation} $\mathit{arcTime} : F \to \mathds{R}$ defining for an output arc $(t,p)$ how much time $\mathit{arcTime}(t,p)$ has to pass until a produced token becomes available.
\end{itemize}
Figure~\ref{fig:bhs} shows multiple labeled coloured CPNs:

\smallskip
In \emph{Process} (shown in red) each arc is annotated with the variable $\mathit{pid}$ and each place $p$ has the colorset $\emph{colSet}(p) = \mathit{Pid}$ describing identifiers for different bags (cases) of the process. Thus, bags are described by their identifiers and transitions consume and produce these identifiers, thereby moving the bag forward through the process. Note that some transitions have no pre-places, e.g., $c1_s$. The $pid$ value that these transitions produce can be chosen freely; our replay semantics in Sect.~\ref{sec:cpn_replay} will determine the $pid$ value based on an event log.

\smallskip
In \emph{CheckIn-c1}, places \emph{idle} and \emph{busy} have colorset $\mathit{Rid}$. Place \emph{idle} carries a token $(c1,0)$ meaning value $c1$ is available from time $0$ onwards, i.e., resource $c1$ is idle at this time.
All arcs carry variable $rid$ as arc expression. Arc $(c1_s,busy)$ carries time annotation $\mathit{arcTime}(c1_s,busy) = tsr_{c1} > 0$ which specifies a delay of $tsr_{c1}$ time units. When check-in starts ($c1_s$ fires) the resource $c1$ moves from \emph{idle} to \emph{busy} and only becomes available after $tsr_{c1}$ time units. Then check-in can complete and $c1$ moves from \emph{busy} to \emph{idle} and only becomes available after $twr_{c1}$ time units.

\smallskip
In \emph{Queue-c1:m2}, place \emph{queue} has a colorset of a pair $(qid,q)$ where $qid \in \mathit{Qid}$ is a queue identifier and $q \in \mathit{Pid^*}$ is a list of bag identifiers. The initial token on \emph{queue} is $(c1:m2,\langle \rangle)$ (i.e., the empty queue). Transition $c1_c$ places a new bag ($pid$) on the start of conveyor belt by adding it to the end of the current queue, transition $m1_s$ removes a bag from the end of conveyor belt by removing its head from the queue. When $c1_c$ fires, the current queue $(qid,q), q=\langle pid_1,\ldots,pid_n\rangle$ is consumed from place \emph{queue} and a bag identifier $pid$ is appended to $q$, producing $(qid, q\hat{\ }\hat{\ } \langle pid \rangle) = \langle pid_1,\ldots,pid_n,pid\rangle$ onto place \emph{queue}. At the same time, token $pid$ is produced on $p_3$ with a delay of $twq_{c1:m2}$, i.e., $pid$ only becomes available after $twq_{c1:m2}$ time units. This allows to model that conveyor belt movement takes time. When this delay for $pid$ has passed and $pid$ is at the head of the queue, i.e., token $(qid,q')$ with $q' = pid::\langle pid_1,\ldots,pid_n\rangle = \langle pid, pid_1,\ldots,pid_n\rangle$ is on place \emph{queue}, then $m1_s$ can fire. If $m1_s$ fires it consumes $(qid,q')$ from \emph{queue} and $pid$ from $p_3$, and produces $(qid,q)$ with $q = \langle pid_1,\ldots,pid_n\rangle$ on \emph{queue}, thereby removing $pid$ from the queue.\vspace*{-1mm}

\subsection{An event log replay semantics for colored Petri nets}
\label{sec:cpn_replay}

We briefly recall the semantic concepts of CPNs. Let $N = (P,T,F,\Sigma,\ell,\mathit{Var},\mathit{Types},\mathit{colSet},m_0,\mathit{arcExp},\mathit{arcTime})$  be a CPN in the following.

\smallskip
A \emph{timed} marking $m$ assigns each place $p$ a multiset $m(p)$ of timed tokens $(val,time')$ where $val$ is the value on $p$ and $time'$ is the time after which $v$ can be consumed. A \emph{state} $s = (m,time)$ of a $N$ has a timed marking $m$ and a time-stamp $time$. The time-stamp $time$ is the global system time reached. The initial state of $N$ is $(m_0,0)$.

A \emph{binding} $\beta : \mathit{Var} \to \mathit{Values}$ binds each variable $v$ to some value $\beta(var)$. An arc expression $exp$ can be evaluated under a binding $\beta$ by replacing each variable $var$ in $exp$ with $\beta(var)$ and computing the result, we denote the result by $exp[\beta]$.

A transition $t$ is \emph{enabled} in a state $s = (m,time)$ for binding $\beta$ iff for each input arc $(p,t) \in F$ holds there exists $(val,time') \in m(p)$ so that $\mathit{arcExp}(p,t)[\beta] = val$ and $time' \leq time$, i.e., evaluating the arc expression $\mathit{arcExp}(p,t)$ yields a value $val$ which is already available at the current time.

\medskip
If $t$ is enabled in $s$ for binding $\beta$, then $t$ can fire resulting in the \emph{transition-step} $s = (m,time) \xrightarrow{t,\beta} (m',time) = s'$ where
\begin{enumerate}
\itemsep=0.9pt
\item for each pre-place $p$ of $t$, remove from $m(p)$ a timed token $(\mathit{arcExp}(p,t)[\beta],time')$ with $time' \leq time$, resulting in an intermediate marking $m''$, and
\item for each post-place $p$ of $t$, add to $m''(p)$ the token $(\mathit{arcExp}(t,p),time + \mathit{arcTime}(t,p))$.
\end{enumerate}
Further $N$ can make a \emph{time-step} $s = (m,time) \xrightarrow{delay} (m,time+delay) = s', delay \geq 0$. In the original CPN semantics~\cite{DBLP:journals/cacm/JensenK15} time has to advance at most until the next transition becomes enabled. We drop this requirement and allow arbitrary time-progress to facilitate replaying event logs with their own time-stamps.

We can now define the semantics of replaying a multi-entity event log $L = (E,\mathit{AN},\mathit{ET},\#)$ over a CPN. The idea is that the activity name $\#_{act}(e)$ of an event $e$ specifies the label $\ell(t)$ of the transition $t$ that shall be fired, $\#_{time}(e)$ specifies the global time when $t$ fires. We treat the attributes $\mathit{AN} \setminus \{ time, act\}$ as variables and the attribute-value function $\#_a(e) = v$ defines the binding $\beta(a) = v$ for which $t$ shall fire. For this definition we will ignore the entity types $\mathit{ET}$ as the definition is general to any CPN.
\begin{definition}[CPN Replay Semantics]
Let $N = (P,T,F,\Sigma,\ell,\mathit{Var},\mathit{Types},\mathit{colSet},m_0,\mathit{arcExp},\mathit{arcTime})$  be a CPN with initial state $s_0 = (m_0,0)$. Let $L = (E,\mathit{AN},\mathit{ET},\#)$ be a time-complete multi-entity event log, i.e., each event has a time-stamp.

\smallskip
Let $\sigma = \langle e_1,\ldots,e_n\rangle$ be a sequence of all events in $E$ such that $\#_{time}(e_i) \leq \#_{time}(e_{i+1})$ for $1 \leq i < n$.
Each event $e_i$ defines the \emph{binding} $\beta_i$ with $\beta_i(a) = \#_a(e_i)$ for all $a \in \mathit{AN} \setminus \{ time, act\}$.

An event sequence $\sigma$ of $L$ can be \emph{replayed} on $N$ iff for $1 \leq i \leq n$ exist delay $0 \leq d_i \in \mathds{R}$ and transition $t_i \in T$ with $\ell(t) = \#_{act}(e_i)$ so that $s_{i-1} = (m_{i-1},time_{i-1}) \xrightarrow{d_i} (m_{i-1},\#_{time}(e_i)) \xrightarrow{t_i,\beta_i} (m_{i},\#_{time}(e_i)) = s_i$ are time- and transition-steps in $N$ that advance to $\#_{time}(e_i)$ and fire $\#_{act}(e_i)$. %and $time_i \leq \#_{time}(e_i)$.

By $t(e_i,\sigma)$ we denote for each event $e_i$ the transition $t_i \in T$ which performed the transition-step described by $e_i$ when replaying $\sigma$.
\end{definition}
Note that this definition requires that the transition $t = t(e_i,\sigma)$ is enabled at time $\#_{time}(e_i)$ (or was already enabled earlier).

\medskip
For example, the sequential trace $\sigma_{qid}^{c3:m3} = \langle e_0, e_1 \rangle$ of the event log in Table~\ref{tab:event_table_multiid} is replayed on the CPN \emph{Queue-c3:m3} (identical to CPN \emph{Queue-c3:m3} in Fig.~\ref{fig:bhs}) as follows:
\begin{itemize}
\itemsep=0.9pt
\item the initial marking is $m_0(queue) = [((c3:m3,\langle \rangle),0)], m_0(p3) = []$;
\item $e_0$ yields steps $(m_0,0) \xrightarrow{9\mathord{:}00\mathord{:}15} (m_0, 9\mathord{:}00\mathord{:}15) \xrightarrow{c3_c,\beta_1} (m_1, 9\mathord{:}00\mathord{:}15)$ with binding $\beta_1(qid) = c3:m3, \beta_2(pid) = 50$ and resulting marking $m_1(queue) = [((c3:m3,\langle 50 \rangle), 9\mathord{:}00\mathord{:}15)], m_1(p3) = [ (50,9\mathord{:}00\mathord{:}15+twq_{c3:m3}) ]$;
\item $e_1$ yields steps $(m_1, 9\mathord{:}00\mathord{:}15) \xrightarrow{0\mathord{:}00\mathord{:}15} (m_2, 9\mathord{:}00\mathord{:}30) \xrightarrow{m3_s,\beta_2} (m_2, 9:00:30)$ with binding $\beta_2(qid) = c3:m3, \beta_2(pid) = 50$ and resulting marking $m_2(queue) = [((c3:m3,\langle \rangle), 9\mathord{:}00\mathord{:}30)]$ and $m_2(p3) = [ ]$.
\end{itemize}

\subsection{Synchronous proclet systems with data and time}
\label{sec:pqr:cpn_proclets}

We can now define our model of a synchronous proclet system with data and time.

\begin{definition}[CPN Proclet]
A \emph{CPN proclet} $\mathit{Proc} = (N,et)$ is a CPN $N = (P,T,F,\Sigma,\ell,\mathit{Var},\mathit{Types},\mathit{colSet},m_0,\mathit{arcExp},\mathit{arcTime})$  so that
\begin{itemize}
\item $et \in Var$ is a designated \emph{entity type variable} that can be bound to entity identifier values $\mathit{colSet}(et) = \mathit{IDval}$,
\item for each transition $t$ with a pre-place, there exists an arc $(p,t)$ with $\mathit{arcExp}(p,t)$ has the form $et$ or $(et,exp)$ where $exp$ is some expression, and
\item for each transition $t$ with a post-place, there exists an arc $(t,p)$ with $\mathit{arcExp}(t,p)$ has the form $et$ or $(et,exp)$ where $exp$ is some expression.
\end{itemize}
\end{definition}
All CPNs in Fig.~\ref{fig:bhs} are CPN-proclets, e.g., \emph{Queue-c1:m2} has $\mathit{et} = qid$ and $c1_c$ and $m1_s$ both have incoming and outgoing arcs of the form $(qid,exp)$; the arcs to/from $p_3$ have a different form.

This structure ensures that each transition in a CPN-proclet $\mathit{Proc} = (N,et)$ occurs in relation to a specific entity instance identified by variable $et$, e.g., a specific bag, resource, or queue. A \emph{proclet system} synchronizes multiple proclets via channels.
\begin{definition}[CPN Proclet System]\label{def:cpn_proclet_system}
A \emph{CPN proclet system} $S = (\{\mathit{Proc}_1,\ldots,\mathit{Proc}_k\},C)$ is a set $\{\mathit{Proc}_1,\ldots,\mathit{Proc}_k\}$ of proclets with disjoint sets of transitions and places, and a set of \emph{synchronization channels} $C \subseteq 2^T$ being sets of transitions so for each channel $\{t_1,\ldots,t_r\}$ holds $\ell(t_i) = \ell(t_j)$.
\end{definition}
In Fig.~\ref{fig:bhs} shows a CPN proclet system where the channels are indicated by dashed edges, e.g., all transitions labeled $c1_c$ in \emph{CheckIn-c1}, \emph{Process}, and \emph{Queue-c1:m2} form a channel.

The intuition is that transitions connected via a channel $(t_i,t_j)$ can only fire together, i.e., proclets $\mathit{Proc}_i$ and $\mathit{Proc_j}$ must each be in a marking where $t_i$ and $t_j$ are enabled for the same binding (i.e., variables occurring in both $t_i$ and $t_j$ must be bound to the same values). While the original proclet semantics~\cite{DFahlandMultiDim} is an operational semantics, we now provide a replay semantics over a CPN proclet system.

We replay a multi-entity event log $L = (E,\mathit{AN},\mathit{ET},\#)$ over a CPN proclet system $S = (\{\mathit{Proc}_1,\ldots,\mathit{Proc}_k\},C)$ by decomposing $L$ into its sequential event logs $L_{et},et\in\mathit{ET}$ and replaying each sequential event log over the corresponding proclet in $S$. As multiple proclets may use the same entity type $et$ (e.g., \emph{CheckIn-c1} and \emph{MergeUnit-m2} both use $rid$), we need to specify which case in $L_{et}$ belongs to which proclet $\mathit{Proc}_i$ in $S$.
\begin{definition}[Replaying a Log over a CPN Proclet System]\label{def:cpn_proclet_system:replay_semantics}
Let $L = (E,\mathit{AN},\mathit{ET},\#)$ be a multi-entity event log (Def.~\ref{def:event_log_multiid}) that is time-complete (Def.~\ref{def:event_log_time-incomplete}). Let $S = (\{\mathit{Proc}_1,\ldots,\mathit{Proc}_k\},C)$ be a proclet system.

Let $f : \mathit{Val} \to \{1,\ldots,k\}$ be a mapping so that for each $\mathit{et} \in \mathit{ET}$ and each $id \in et(L)$, $f(id) = i$ maps to a proclet $\mathit{Proc}_i$ with $et_i = et$.

\eject
The entire log $L$ can be \emph{replayed} over $S$ (for a given mapping $f$) iff the following conditions hold:
\begin{enumerate}
\item For each $et \in \mathit{ET}$ and sequential event log $\sigma(L,et)$ of $L$ (Def.~\ref{def:event_log_multiid:seq_view}) holds: each trace $\sigma_{et}^id \in \sigma(L,et)$ can be replayed on the CPN $N_i$ of proclet $\mathit{Proc}_i$ with $i = f(id)$.
\item for each event $e \in E$ and all traces $\sigma_{et_1}^{id_1},\ldots,\sigma_{et_k}^{id_k}$ that contain $e$, the set $c = \{ t(e,\sigma_{et_1}^{id_1}),\ldots,$ $t(e,\sigma_{et_k}^{id_k})\}$ of transitions replayed in the different traces is either singleton $c = \{ t \}$ or a channel $c \in C$ of $S$.
\end{enumerate}
\end{definition}
Figure~\ref{fig:example} shows how the multi-entity event log $L$ of Tab.~\ref{tab:event_table_multiid} can be replayed over the proclet system of Fig.~\ref{fig:bhs}. Each dashed rectangle in Figure~\ref{fig:example}(a) abstractly illustrates how one sequential trace in $L$ is replayed over one of the proclets in Fig.~\ref{fig:bhs}, the circles indicate parts of the markings reached after replaying each event, e.g., replaying $e0$ in proclet \emph{Queue-c3:m3} yields a token $(c3:m3,[50])$ on place $queue$, etc. The dashed lines indicate how the channel constraints are satisfied by this replay. For instance, event $e_4$ is replayed by transition $m4_c$ in proclet $\mathit{Queue-c4:m4}$ (trace $\sigma_{qid}^{c4:m4}$), by $m4_c$ in $\mathit{Process}$ (trace $\sigma_{pid}^{50}$), and by $m4_c$ in $\mathit{MergeUnit-m4}$ (trace $\sigma_{rid}^{m4}$); all three $m4_c$ transitions form a channel in the proclet system in Fig.~\ref{fig:bhs}.

\subsection{PQR systems}
\label{sec:pqr:systems}

In the following, we only study a specific sub-class of CPN proclet systems which describes processes with shared resources and queues, which we call PQR-systems. Each PQR system is composed of one process proclet and multiple resource and multiple queue proclets in a specific way. Figure~\ref{fig:bhs} shows an example of a PQR system. We define each proclet type first and then the entire composition.

\medskip
Intuitively, a process proclet describes a sequential process where each process step has designated \emph{start} and \emph{complete} transitions, i.e., each step is non-atomic and start and complete are separately observable. Moreover, the process proclet allows creating arbitrarily many fresh process instances through source transitions without pre-places; cases that complete are consumed by sink transitions without post-places. This is different from the concept of workflow nets~\cite{Aalst2004WorkflowMD} which model only the evolution of a single case and abstract from case creation and deletion.

\begin{definition}[Process proclet]\label{def:p-proclet}
A \emph{Process-proclet} (or P-proclet) is a CPN proclet $(N,pid), N = (P,T,F,\Sigma,\ell,\mathit{Var},\mathit{Types},\mathit{colSet},m_0,\mathit{arcExp},\mathit{arcTime})$ where the following properties hold:
\begin{enumerate}
  \item $P = P_{activity} \uplus P_{handover}$ (places either describe that an activity is being executed or that a case being handed over to the next activity);
  \item $T = T_{start} \uplus T_{complete}$ (transitions either describe that an activity is being started or being completed)
  \item $N$ is a state-machine, i.e., $|\PNpre{t}| \leq 1$ and $|\PNpost{t}| \leq 1$ and all nodes are connected,
  \item $N$ is transition-bordered, i.e,. $|\PNpre{p}| \geq 1$ and $|\PNpost{p}| \geq 1$ and the sets $T_{source} = \{ t \in T \mid \PNpre{t} = \emptyset \} \neq \emptyset$ and $T_{sink} = \{ t \in T \mid \PNpost{t} = \emptyset \} \neq \emptyset$ of \emph{source and sink transitions} are non-empty.
  \item Each activity place $p \in P_{activity}$ is only entered via start transitions and only left with complete transitions, i.e., $\PNpre{p} \subseteq T_{start}$ and $\PNpost{p} \subseteq T_{complete}$.
  \item Each handover place $p \in P_{handover}$ is only entered via exactly one complete transitions (of the preceding activity) and left only via exactly one start transition (of the succeeding activity), i.e., $\PNpre{p} = \{ t \} \subseteq T_{complete}$ and $\PNpost{p} = \{ t \} \subseteq T_{start}$.
  \item All arcs carry the entity identifier $pid$: $\mathit{arcExp}(x,y) = pid \in \mathit{Var}$ for all $(x,y) \in F$.
  \item No place carries an initial token: $m_0(p) = []$ for all $p \in P$.
\end{enumerate}
\end{definition}
The proclet \emph{Process} in Figure~\ref{fig:bhs} is a P-proclet.

\medskip
In this paper, a resource proclet defines the most basic life-cycle of a shared resource: the resource is initially \emph{idle} (available to do work), then starts an activity making the resource \emph{busy}. There is a \emph{minimum service time} $tsr$ the resource is busy before the task completes. After completing the task, the resource is \emph{idle} again but requires a \emph{minimum waiting time} $twr$ before being able to work again. Figure~\ref{fig:bhs} shows several resource proclets which we formally capture in the following definition.

\begin{definition}[Resource proclet]
A \emph{Resource-proclet} (or R-proclet) is a CPN proclet $(N,rid), N = (P,T,F,\Sigma,\ell,\mathit{Var},\mathit{Types},\mathit{colSet},m_0,\mathit{arcExp},\mathit{arcTime})$  for a resource with minimum service time $tsr$ and minimum waiting time $twr$ when the following properties hold:
\begin{enumerate}
  \item $P = \{p_{idle},p_{busy}\}$;
  \item $T = T_{start} \uplus T_{complete}$ (transitions either describe that an activity is being started or being completed);
  \item $\PNpost{p}_{idle} = T_{start} = \PNpre{p}_{busy}$ (resources go from idle to busy via start transitions);
  \item $\PNpost{p}_{busy} = T_{complete} = \PNpre{p}_{idle}$ (resources go from busy to idle via start transitions);
  \item $\mathit{arcExp}(x,y) = rid$ for all $(x,y) \in F$
  \item $\mathit{arcTime}(t_{start},{p}_{busy}) = tsr$ for all $t_{start} \in T_{start}$ and $\mathit{arcTime}(t_{complete},{p}_{idle}) = twr$ for all $t_{complete} \in T_{complete}$
  \item $m_0(p_{idle}) = [rid]$ and $m_0(p_{busy}) = []$ (the resource is idle initially)
\end{enumerate}
\end{definition}
An R-proclet has multiple start and complete transitions to mirror that an activity in a P-proclet has multiple start and complete transitions. This will simplify the composition of proclets later on. For example \emph{Merging-Unit-m2} in Fig.~\ref{fig:bhs} has two start and one complete transitions while \emph{Diverting-Unit-d1} has one start and two complete transitions.

\medskip
In this paper, a queue proclet defines the most basic operation of a queue: it ensures that items leave the queue in the order in which they entered the queue; moreover, we specify that traversing the queue requires a minimal waiting time $twq$.
\begin{definition}[Queue proclet]
A \emph{Queue-proclet} (or Q-proclet) is a CPN proclet $(N,qid), N = (P,T,F,\Sigma,\ell,\mathit{Var},\mathit{Types},\mathit{colSet},m_0,\mathit{arcExp},\mathit{arcTime})$  for a queue identified by entity identifier value $Q$ with minimum waiting time $twq$ when $N$ is an instance of the CPN template shown in Fig.~\ref{fig:q-proclet}.
\begin{figure}[h]
  \centering
  \includegraphics[width=.4\linewidth]{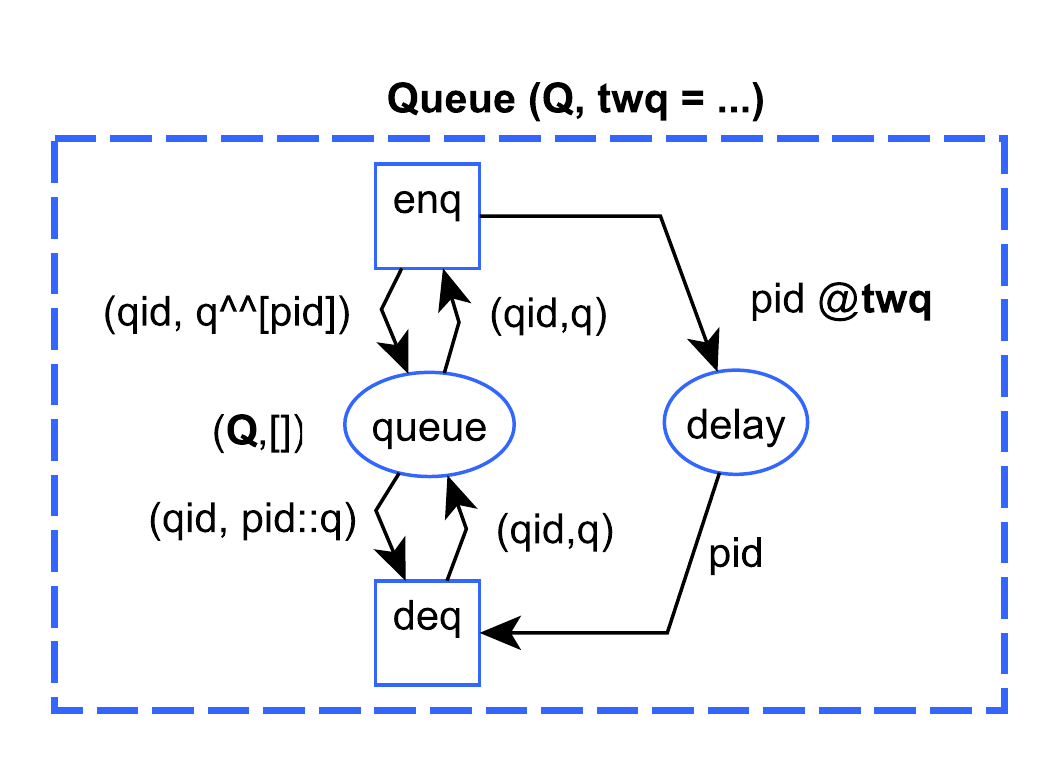}\vspace*{-2mm}
  \caption{CPN template for Queue proclet}\label{fig:q-proclet}
\end{figure}\vspace*{-1mm}
\end{definition}
For example proclet \emph{Queue-c1:m2} in Fig.~\ref{fig:bhs} is identified by entity identifier value \emph{c1:m2} and has minimum waiting time $\mathit{twq}_{c1:m2}$.

We can now formally define a PQR system as a CPN proclet system of one process proclet and multiple resource and queue proclets. A PQR system has specific synchronization constraints. Each activity in a process proclet (a place $p \in P_{activity}$ with corresponding start and complete transitions) synchronizes with one resource proclet which is responsible for executing this activity for any incoming case. Each handover between two activities in a process proclet (a place $p \in P_{handover}$) synchronizes with one queue moving cases from one activity to the next.

\begin{definition}[PQR system]\label{def:pqr-system}
A PQR system is a \emph{CPN proclet system} $S = (\{\mathit{Process}_0,R_1,\ldots,R_k,Q_{k+1},\ldots,Q_n\},C)$ where the following properties hold:
\begin{enumerate}
  \item $\mathit{Process}_0$ is a process proclet, $\mathit{R}_1,\ldots,\mathit{R}_k$ are resource proclets, $\mathit{Q}_{k+1},\ldots,\mathit{Q}_n$ are queue proclets.
  \item Each transition $t \in T_0 \cup \ldots \cup T_n$ is in exactly one channel $c \in C$, denoted by $c(t)$.
  \item For each activity place $p \in P_{activity}$ in $\mathit{Process}_0$ exists a resource proclet $R_i$ so that (1) for each $t_{start,0} \in \PNpre{p}$ (in $\mathit{Process}_0$) exists $t_{start,i} \in T_{start,i}$ (in $R_i$) with $c(t_{start,0}) = c(t_{start,i})$ and (2) for each $t_{complete,0} \in \PNpost{p}$ (in $\mathit{Process}_0$) exists $t_{complete,i} \in T_{complete,i}$ (in $R_i$) with $c(t_{complete,0}) = c(t_{complete,i})$.
  \item For each handover place $p \in P_{handover}$ in $\mathit{Process}_0$ exists a queue proclet $Q_i$ so that (1) transition $\{ t_{complete,0} \} = \PNpre{p}$ (in $\mathit{Process}_0$) synchronizes with $t_{enqueue,i} \in T_i$ (in $Q_i$) via $c(t_{complete,0}) = c(t_{enqueue,i})$ and (2) transition $\{ t_{start,0} \} = \PNpost{p}$ (in $\mathit{Process}_0$) synchronizes with $t_{dequeue,i} \in T_i$ (in $Q_i$) via $c(t_{start,0}) = c(t_{dequeue,i})$.
\end{enumerate}
\end{definition}
The proclet system in Fig.~\ref{fig:bhs} is a PQR system.

\eject
The above definition is rather declarative. To satisfy the above constraints there have to be enough R-proclets (one per activity place) of the correct shape (to match the start and complete transitions in the P-proclet), and enough Q-proclets (one per handover place). Moreover, the transition labels in all proclets have to match to form valid channels.

The definition enforces that in a PQR system each activity in a process is carried out by a shared resource of limited capacity (condition 3 in Def.~\ref{def:pqr-system}). Thus when multiple case arrive at the same activity, only for one of them the activity can be started while the others have to wait. Further, when an activity for a case is completed, the case enters a queue and can only reach the next activity when all other cases before it have reached that activity (condition 4 in Def.~\ref{def:pqr-system}).

Many real-life processes show more general use of shared resources and handover of cases than these very strict constraints. However, they are satisfied by material handling systems such as baggage handling systems. Generalizing the definition to other types of processes with shared resources is beyond the scope of this paper.

\subsection{Restoring partial event logs of a PQR system}
\label{sec:pqr:problem_statement}
\label{sec:problem_statement}

We can now formally state our research problem.

\medskip
Let $L$ be a multi-entity event log. And let $S$ be a PQR system defining proclets for a process (with case identifier $pid$), multiple resources (with entity identifier $rid$) and queues (with entity identifier $qid$); see Def.~\ref{def:event_log_multiid} and Def.~\ref{def:pqr-system}.

$L$ is \emph{correct and complete} log of $S$ iff $L$ can be \emph{replayed} over the entire system $S$; see Def.~\ref{def:cpn_proclet_system:replay_semantics}. The event log of Tab.~\ref{tab:event_table_multiid} is a complete log of the PQR system in Fig.~\ref{fig:bhs}.

A correct and complete log $L$ has at least entity identifiers $pid$, $rid$, and $qid$ (as these are required by a PQR system). Further, each trace $\sigma_{et}^{id}$ can be replayed on the corresponding proclet, i.e., each trace describes a complete execution of the proclet for instance $id$. Further, all traces $\sigma_{pid}^{id} \in \sigma(L,pid)$ of process entities ($pid$) are ordered relative to each other via the shared resources and queues as described in $S$.

In reality often only a subset of activities $B \subseteq A = \{ \#_{act}(e) \mid e \in L \}$ and the control-flow identifier $pid$ have been recorded in a log, making it \emph{partial}.
\begin{definition}[Partial Log, Observed Event, Corresponds]\label{def:partial_log}
A log $L' = (E',\mathit{AN}',\{pid\},\#')$ is a \emph{partial} (and correct) log of PQR system $S$ if there exists a correct and complete log $L = (E,\mathit{AN},\mathit{ET},\#)$ of $S$ such that
\begin{enumerate}
  \item $E' \subseteq E$, $\mathit{AN}' \subseteq \mathit{AN}$, and $\#' = \#|_{E' \times \mathit{AN}'}$ is the restriction of $\#$ to $E'$ and $\mathit{AN}'$,
  \item each $e \in E'$ has only case notion $pid$, i.e., $\#_{pid}(e)  \neq \perp$ (and $\#_{rid}(e) = \#_{qid}(e) =\perp$),
  \item $\#_{time}(e)$ is defined, and
  \item for each complete process trace $\sigma(L,pid=id) = \langle e_1,\ldots,e_n\rangle$ the partial trace $\sigma(L',pid=id) = \langle f_1,\ldots,f_k\rangle$ records at least the first and last event $e_1 = f_1$ and $e_n = f_k$.
\end{enumerate}
We call each event $e \in E'$ an \emph{observed event}. We say that the complete log $L$ \emph{corresponds to} the partial log $L'$.
\end{definition}

\newcommand\cg[1]{ {\color{gray} #1} }

\begin{table}\centering
\caption{Partial event log of the complete event log in Tab.~\ref{tab:event_table_multiid}, missing events and attributes shown in grey}\label{tab:event_table_multiid:partial}
{\small\sffamily
\begin{tabular}{lrrrrr}
  \hline
  % after \\: \hline or \cline{col1-col2} \cline{col3-col4} ...
  event id & pid & activity & time & \cg{rid} & \cg{qid} \\
  \hline
$e_0$ & 50 & $c3_c$ & 01.01.20 9:00:15 & \cg{$\perp$} & \cg{c3:m3}\\
$e_1$ & 50 & $m3_s$ & 01.01.20 9:00:30 & \cg{m3} & \cg{c3:m3}\\
\cg{$e_2$} & \cg{50} & \cg{$m3_c$} & \cg{01.01.20 9:00:40} & \cg{m3} & \cg{m3:m4}\\
\cg{$e_3$} & \cg{50} & \cg{$m4_s$} & \cg{01.01.20 9:00:45} & \cg{m4} & \cg{m3:m4}\\
\cg{$e_4$} & \cg{50} & \cg{$m4_c$} & \cg{01.01.20 9:00:50} & \cg{m4} & \cg{m4:d1}\\
$e_7$ & 50 & $d1_s$ & 01.01.20 9:01:05 & \cg{d1} & \cg{m4:d1}\\
\cg{$e_8$} & \cg{50} & \cg{$d1_c$} & \cg{01.01.20 9:01:10} & \cg{d1} & \cg{d1:s1}\\
$e_{18}$ & 50 & $s1_s$ & 01.01.20 9:01:15 & \cg{$\perp$} & \cg{d1:s1}\\
  \hline
$e_{17}$ & 51 & $c4_c$ & 01.01.20 9:00:35 & \cg{$\perp$} & \cg{c4:m4}\\
$e_5$ & 51 & $m4_s$ & 01.01.20 9:00:55 & \cg{m4} & \cg{c4:m4}\\
\cg{$e_6$} & \cg{51} & \cg{$m4_c$} & \cg{01.01.20 9:01:00} & \cg{m4} & \cg{m4:d1}\\
\cg{$e_9$} & \cg{51} & \cg{$d1_s$} & \cg{01.01.20 9:01:15} & \cg{d1} & \cg{m4:d1}\\
\cg{$e_{10}$} & \cg{51} & \cg{$d1_c$} & \cg{01.01.20 9:01:20} & \cg{d1} & \cg{d1:d2}\\
$e_{11}$ & 51 & $d2_s$ & 01.01.20 9:01:25 & \cg{d2} & \cg{d1:d2}\\
\cg{$e_{12}$} & \cg{51} & \cg{$d2_c$} & \cg{01.01.20 9:01:30} & \cg{d2} & \cg{d2:s2}\\
$e_{19}$ & 51 & $s2_s$ & 01.01.20 9:01:35 & \cg{$\perp$} & \cg{d2:s2}\\
\hline
\end{tabular}}
\end{table}

Thus, a partial log $L'$ contains for each case $pid$ at least one \emph{partial trace} $\sigma_{pid}^{id}$ recording the entry and exit of the case and preserving the order of observed events, i.e., it can be completed to fit the model. An MHS typically records a partial log as defined above. Tab.~\ref{tab:event_table_multiid:partial} shows a partial event log of the complete log of Tab.~\ref{tab:event_table_multiid}. Fig.~\ref{fig:example}(b) highlights the events that are recorded in the partial event log .

\medskip
Note that a partial event log coincides with the definition of a classical single-entity event log (Def.~\ref{def:event_log_single}). In a partial event log, events of different process cases are less ordered, e.g., observed events $e1$ and $e5$ in Fig.~\ref{fig:example} are unordered wrt.\ any resource or queue whereas they are ordered in the corresponding complete event log.

\begin{lemma}
Let $L'$ be a partial event log of a PQR system $S$. Let $L$ be a complete event log of $S$ that corresponds to $L'$. Let $\pi(L')$ and $\pi(L)$ be the system-level runs of $L'$ and $L$, respectively. Then for each $e_1,e_2 \in E'$: $e_1 < e_2$ in $L'$ implies $e_1 < e_2$ in $L$.
\end{lemma}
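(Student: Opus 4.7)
The key observation is that in the partial event log $L'$ the only entity type is $pid$, so the system-level order $<^{L'}$ collapses to the pid-order on $E'$. Formally, $\lessdot^{L'} = \lessdot^{L'}_{pid}$, hence $<^{L'} = (\lessdot^{L'}_{pid})^+ = <^{L'}_{pid}$. Thus $e_1 <^{L'} e_2$ yields a chain $e_1 = f_0 \lessdot^{L'}_{pid} \cdots \lessdot^{L'}_{pid} f_n = e_2$ in $E'$, from which transitivity gives $\#_{pid}(e_1) = \#_{pid}(e_2)$ and $\#_{time}(e_1) < \#_{time}(e_2)$ in $L'$.

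Next I would invoke Def.~\ref{def:partial_log}: since $E' \subseteq E$, both $pid$ and $time$ lie in $\mathit{AN}'$, and $\#' = \#|_{E' \times \mathit{AN}'}$, the pid-values and timestamps assigned to $e_1$ and $e_2$ by $L$ agree with those assigned by $L'$. Consequently, in $L$ we also have $\#_{pid}(e_1) = \#_{pid}(e_2)$ and $\#_{time}(e_1) < \#_{time}(e_2)$, which by definition of the entity-level precedence yields $e_1 <^L_{pid} e_2$. The global order $<^L$ is the transitive closure of $\lessdot^L$, which contains $\lessdot^L_{pid}$ and therefore $<^L_{pid}$; this gives $e_1 <^L e_2$ as required.

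The whole argument is a definition chase, so there is no real obstacle. The one subtlety worth flagging is that a single step $f_i \lessdot^{L'}_{pid} f_{i+1}$ in $L'$ need not be a direct precedence in $L$, because $L$ may contain additional case-$pid$ events between $f_i$ and $f_{i+1}$ that have been removed when forming $L'$. This is harmless: we only need to produce the \emph{transitive} order $<^L$ in the end, and enlarging the entity-type set from $\{pid\}$ to $\{pid,rid,qid\}$ only adds direct-precedence edges, never removes them, so the pid-order of $L'$ is always preserved inside $<^L$.
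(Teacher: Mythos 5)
Your proposal is correct and follows essentially the same route as the paper's proof: a definition chase showing that $e_1 < e_2$ in $L'$ forces $\#_{pid}(e_1) = \#_{pid}(e_2)$ and $\#_{time}(e_1) < \#_{time}(e_2)$, that these attributes are preserved under the restriction $\#' = \#|_{E' \times \mathit{AN}'}$, and hence $e_1 < e_2$ holds in $L$. Your extra remark that direct precedence in $L'$ need not remain direct in $L$ (only the transitive order is needed) is a correct and worthwhile clarification of the same argument.
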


\begin{proof}
For any $e_1 < e_2$ in $L'$ holds $\#_{pid}(e_1) = \#_{pid}(e_2)$ and $\#_{time}(e_1) < \#_{time}(e_2)$. These properties also hold in $L$, thus $e_1 < e_2$ in $L$.
\end{proof}
The converse does not hold. In the complete system-level run $\pi(L)$ in Fig.~\ref{fig:example}(b), $e_5 < e_7$ (due to $qid = m4:d1$), whereas $e_5 \nless e_7$ in the system-level run of the partial log $L'$ (where $e_5$ and $e_7$ are unrelated). In the following, we investigate how to infer missing events and infer missing time-stamps, and thereby reconstruct the missing ordering relations.

\paragraph{Formal problem statement}

Let $S$ be a model of a PQR system defining life-cycles of process, resource, and queue proclets, which resources and queues synchronize on which process step, and for each resource the minimum service time $tsr$ and waiting time $twr$ and for each queue the minimum waiting time $twq$. Given $S$ and a partial event log $L_1$ of $S$, we want to construct a complete log $L_2$ of $S$ that corresponds to $L_1$ (and can be replayed on $S$) according to Def.~\ref{def:partial_log}.

\medskip
Restoring the \emph{exact} timestamp is generally infeasible and for most use cases also not required. Thus, our problem formulation does not require to reconstruct the exact time-stamps. Our CPN replay semantics allows to fire transitions after their first moment of enabling, however they have to fire ``early enough'' so that time constraints do not conflict with later observed events. Thus, we have to reconstruct \emph{time-windows} providing minimal and maximal timestamps for each unobserved event, resulting in the following sub-problems:
\begin{itemize}
  \item Infer unobserved events $E_u$ for all process cases in $L_1$ and their relations to queues and resources (infer missing identifiers)
  \item Infer for each unobserved event $e \in E_u$ a time-window of earliest and latest occurrence of the event $\#_{tmin}(e), \#_{tmax}(e)$ so that setting $\#_{time}(e) = \#_{tmin}(e)$ or $\#_{time}(e) = \#_{tmax}(e)$ for $e \in E_u$ results in a complete log of $S$.
\end{itemize}

\section{Inferring timestamps along entity traces}
\label{sec:solution}

In Sect.~\ref{sec:pqr:problem_statement}, we presented the problem of restoring missing events and time-windows for their timestamps from a partial event log $L_1 = (E_1,\mathit{AN}_1,\{pid\},\#^1)$ such that the resulting log is consistent with resource and queueing behavior specified in a PQR System $S$. In this section, we solve the problem for PQR systems with \emph{acyclic} process proclets by casting it into a constraint satisfaction problem, that can be solved using Linear Programming (LP)~\cite{Schrijver86}. In all subsequent arguments, we make extensive use of the fact that we can see any multi-entity event log $L_1$ equivalently as family of sequential event logs $\sigma(L_1,et)$ with traces $\sigma_{et}^{id}$ and as the system-level run $\pi(L_1) = (E_1,<_1,\mathit{AN}_1,\{pid\},\#^1)$ with strict partial order $(E_1,<_1)$.

\medskip
In Sect.~\ref{sec:approach}, we show how to infer unobserved events and resource and queue identifiers (from $S$) to construct an under-specified intermediate system-level run $\pi_2 = (E_2,<_2,\mathit{AN}_2,\{pid,rid,qid\},\#^2)$ where all unobserved events $E_u = E_2 \setminus E_1$ have no timestamp but where $<_2$ already contains all ordering constraint that must hold in $S$.

In Sect.~\ref{sec:algorithm} we then refine $\pi_2$ into $\pi(L_3) = (E_2,<_3,\mathit{AN}_3,\{pid,rid,qid\},\#^3)$ where $<_3$ is no longer explicitly constructed but completely inferred from time stamps that fit $S$. We determine minimal and maximal timestamps $\#_{tmin}^3(e)$ and $\#_{tmax}^3(e)$ for each unobserved event $e \in E_u$ (through a linear program) so that if we set $\#_{time}^3(e) = \#_{tmin}^3(e)$ or $\#_{time}^3(e) = \#_{tmax}^3(e)$, the induced partial order $<_3$ refines $<_2$, i.e., $\mathord{<_2} \subseteq \mathord{<_3}$. By construction of $\#_{tmin}^3(e)$ and $\#_{tmax}^3(e)$, $L_3$ is a complete log of $M$ and has $L_1$ as a partial log. We explain our approach using another (more compact running) example shown in Fig.~\ref{fig:oracle12a}(a) for two bags 53 and 54 processed in the system of Fig~\ref{fig:bhs}. The events in grey italic (i.e., f3, f5, f6, f14) are unobserved.

\begin{figure}[t!]
    \includegraphics[width=\linewidth]{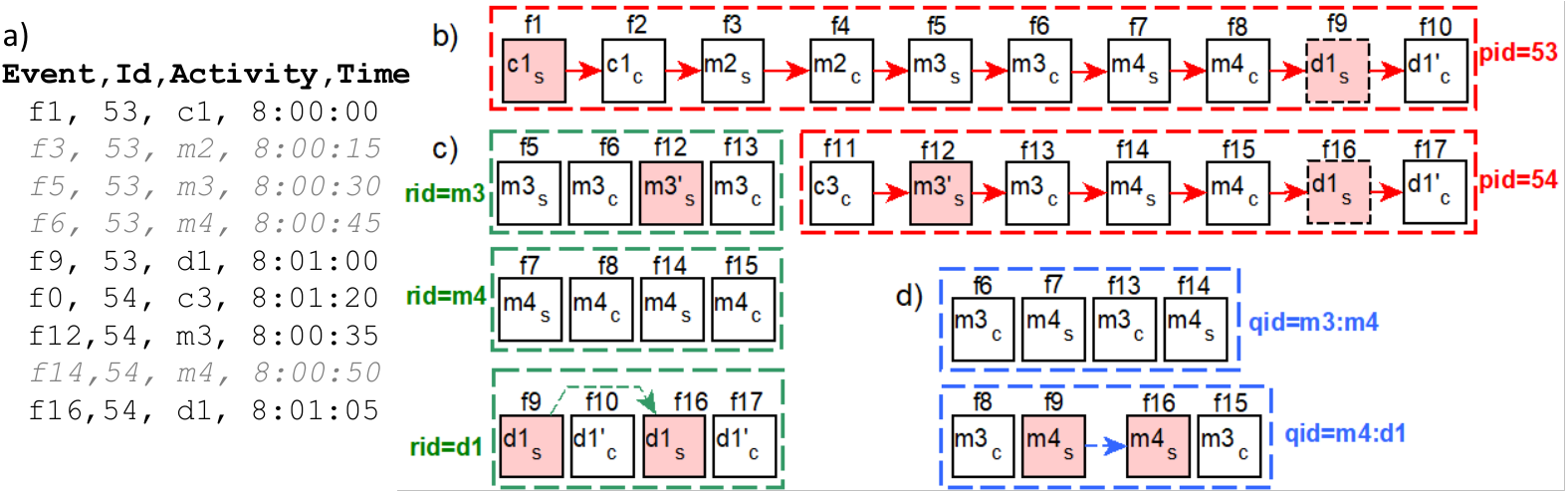}
    \caption{ Another partial event log of the system in Fig.~\ref{fig:bhs} for bags 53 and 54 (a), partially complete traces of the Process (b), Resource (c) and Queue (d) proclets, restored by oracles  $O_1, O_2$. Only observed events are ordered, e.g., $f9<_{rid}^{d1}f16$, while the other events are isolated.}
    \label{fig:oracle12a}
\end{figure}

\subsection{Infer potential complete runs from a partial run}
\label{sec:approach}

We first infer from the partial event log $L_1$ an under-specified intermediate system-level run $\pi_2$ containing all unobserved events and an explicitly constructed SPO $<_2$ so that each entity-level run $\pi_{2,pid}^{id}$ is complete (can be replayed on the process proclet in $S$). In a second step, we relate each unobserved event $e \in E_u = E_2 \setminus E_1$ to a corresponding resource and/or queue identifier which orders observed events wrt. $<_{rid}$ and $<_{qid}$. All unobserved events $e \in E_u$ lack a timestamp and hence are left unordered wrt.\ $<_{rid}$ and $<_{qid}$ in $\pi(L_2)$; we later refine $<_2$ in Sect~\ref{sec:algorithm}.

\medskip
We specify how to solve each of these two steps in terms of two \emph{oracles} $O_1$ and $O_2$ and describe concrete implementations for either.

\paragraph{Restoring process traces}

Oracle $O_1$ has to return a set of sequential traces $L_2 = \{ \sigma_{pid}^{id} \mid \mathit{id} \in \mathit{pid}(L_1) \} = O_1(L_1,S)$ by completing each partial trace $\sigma(L_1,pid=id)$ of any process case $id \in \mathit{pid}(L_1)$ into a complete trace $\sigma_{pid}^{id}$ that can be replayed on the process proclet of $S$. Let $E_2 = \{ e \in \sigma_{pid}^{id} \mid \sigma_{pid}^{id} \in L_2\}$. The restored \emph{unobserved} events $E_u = E_2 \setminus E_1$ only have attributes $act$ and $pid$ and events are \emph{totally ordered} along $pid$ in each trace $\sigma_{pid}^{id}$. $O_1$ can be implemented using well-known trace alignment~\cite{TraceAlginment} by aligning each sequential trace $\sigma(L_1,pid=id)$ on the skeleton net $(P,T,F)$ of the P-proclet of $S$. For example, applying $O_1$ on the partial log of Fig.~\ref{fig:oracle12a}(a) results in the complete process traces of Fig.~\ref{fig:oracle12a}(b).

At this point, the events $e \in E_u$ have no time-stamp and the ordering of events is only available in the explicit sequences $\sigma_{pid}^{id} = \langle e_1,\ldots,e_n \rangle$. Until we have determined $\#_{time}(e_i)$, the SPO $<_2$ has to be constructed explicitly from the ordering of events in the traces $\sigma_{pid}^{id}$, i.e., we define $<_2$ as $e_i < e_j$ iff there ex. a trace $\langle \ldots,e_i,\ldots,e_j,\ldots \rangle = \sigma_{pid}^{id} \in L_2$ (see Cor.~\ref{cor:sequence_po_equivalence}).

Moreover, as each trace $\sigma_{pid}^{id}$ can be replayed on the process proclet, each event is either a start event (replays a start transition $t \in T_{start}$) or a complete event (replays a complete transition $t \in T_{complete}$, see Def.~\ref{def:p-proclet}).

\paragraph{Inferring dependencies due to shared resources and queues}

Oracle $O_2$ has to enrich events in $E_2$ with information about queues and resources so that for each $e \in E_2$ if resource $r$ is involved in the step $\#_{act}(e)$, then $\#_{rid}(e) = r$ and if queue $q$ was involved, then $\#_{qid}(e) = q$.

Moreover, in order to formulate the linear program to derive timestamps in a uniform way, each event $e$ has to be annotated with the performance information of the involved resource and/or queue. That is, if $e$ is a start event and $\#_{rid}(e) = r \neq \perp$, then $\#_{tsr}(e)$ and $\#_{twr}(e)$ hold the minimum service and waiting time of $r$, and if $\#_{qid}(e) = q \neq \perp$, then $\#_{twq}(e)$ hold the minimum waiting time\linebreak of $q$.

For the concrete PQR systems considered in this paper, we set $\#_{rid}(e) = r$ based on the model $S$ if $r$ is the identifier of the resource proclet that synchronizes with transition $t = \#_{act}(e)$ via a channel $c(t)$ (there is at most one). Attributes $\#_{tsr}(e)$, $\#_{twr}(e)$, can be set from the model as they are parameters of the resource proclet. To ease the LP formulation, if $e$ is unrelated to a resource, we set $\#_{rid}(e) = r^*$ to fresh identifier and $\#_{tsr}(e) = \#_{twr}(e) = 0$; $\#_{qid}(e)$ and $\#_{twq}(e)$ are set correspondingly. By annotating the events in $E_2$ as stated above, we obtain $\pi_2 = (E_2,<_2,\mathit{AN}_2,\{pid,rid,qid\},\#^2)$. Moreover, we can update the SPO $<_2$ by inferring $\lessdot_{rid}$ and $\lessdot{qid}$ from $\#_{time}(e)$ for all events where $\#_{rid}(e) \neq \perp$ and $\#_{qid}(e) \neq \perp$ (see Def.~\ref{def:system-level_run}).

The system-level run $\pi_2$, contains complete entity-level runs for $pid$ (except for missing time stamps). The entity-level runs queues ($qid$) and resources ($rid$) already contain all events to be complete wrt.\ $S$ but only the observed events are ordered (due to their time stamps). For example, Fig.~\ref{fig:oracle12a}(d) shows the entity-level run $\pi_{qid}^\textit{m4:d1}$ containing events $f_8,f_9,f_{16},f_{15}$ with only $f_9 <_{qid} f_{16}$. Next, we define constraints based on the information in this intermediate run $\pi$ to infer timestamps for all unobserved events.

\subsection{Restoring timestamps of unobserved events by linear programming}
\label{sec:algorithm}

The SPO $\pi_2 = (E_2,<_2,\mathit{AN}_2,\{pid,rid,qid\},\#^2)$ obtained in Sect.~\ref{sec:approach} from partial log $L_1$ includes all unobserved events $E_u = E_2 \setminus E_1$ of the complete log, but lacks timestamps for each $e \in E_u, \#_{time}(e) = \perp$. Each observed $e \in E_1$ has a timestamp $\#_{time}(e)$ and we also added minimum service time $\#_{tsr}$, waiting time $\#_{twr}(e)$ of the resource $\#_{rid}(e)$ involved in $e$ and minimum waiting time $\#_{twq}(e)$ of the queue involved in $e$. We now  define a constraint satisfaction problem that specifies the earliest $\#_{tmin}(e)$ and latest $\#_{tmax}(e)$ timestamps for each $e \in E_u$ so that all earliest (latest) timestamps yield a consistent ordering of all events in $E$ wrt.\ $<_{pid}$ (events follow the process), $<_{rid}$ (events follow resource life-cycle), and $<_{qid}$ (events satisfy queueing behavior). The problem formulation propagates the known $\#_{time}(e)$ values along with the different case notions $<_{pid}$, $<_{rid}$, $<_{qid}$, using $tsr, twr, twq$.
For that, we introduce \textsf{variables} $x^{\mathrm{tmin}}_{e}$, $x^{\mathrm{tmax}}_{e} \geq 0$ for representing event attributes $tmin,tmax$ of each $e \in E_u$. For all observed events $e \in E_1$, we set $x_e^{\mathrm{tmin}} = x_e^{\mathrm{tmax}} = \#_{time}(e)$ as here the correct timestamp is known. We now define two groups of constraints to constrain the $x_e^{\mathrm{tmin}}$ and $x_e^{\mathrm{tmax}}$ values for the unobserved events further.

\medskip
In the following, we assume for the sake of simpler constraints presented in this paper, that all observed events are start events (which is in line with logging in an MHS). The constraints can easily be reformulated to assume only complete events were observed (as in most business process event logs) or a mix (requiring further case distinctions).

\subsubsection{Propagate information along process traces}

The first group propagates constraints for $\#_{time}(e)$ along $<_{pid}$, i.e., for each process-level run (viz. process trace) $\pi_{pid}^{id}$ of $pid$ in $\pi$. By the steps in Sect.~\ref{sec:approach}, events in $\pi_{pid}^{id}$ are totally ordered and derived from the trace $\sigma_{pid}^{id} = \langle e_1 ... e_m \rangle$.
Each process step has a start and a complete event in $\sigma_{pid}^{id}$, i.e., $m = 2\cdot y, y\in \mathbb{N}$, odd events are start events and even events are complete events.
For each process step $1 \leq i \leq y$, the time between start event $e_{2i-1}$ and complete event $e_{2i}$ is at least the service time of the resource involved (which we stored as $\#_{tsr}(e_{2i-1})$ in Sect.~\ref{sec:approach}). Thus the following constraints must hold for the earliest and latest time of $e_{2i-1}$ and $e_{2i}$.
\begin{equation}\label{eq:eq_tsr_min}
x^{\mathrm{tmin}}_{e_{2i}} = x^{\mathrm{tmin}}_{e_{2i-1}} + \#_{tsr}(e_{2i-1}) ,
\end{equation}
\begin{equation}\label{eq:eq_tsr_max}
x^{\mathrm{tmax}}_{e_{2i}}= x^{\mathrm{tmax}}_{e_{2i-1}} + \#_{tsr}(e_{2i-1}).
\end{equation}
For the remainder, it suffices to formulate constraints only for $start$ events.
We make sure that $tmin$ and $tmax$ define a proper interval for each start event:
\begin{equation}\label{eq:eq_tsr_min_max}
x^{\mathrm{tmin}}_{e_{2i-1}} \leq x^{\mathrm{tmax}}_{e_{2i-1}}.
\end{equation}
We write $e_i^s = e_{2i-1}$ for the start event of the i-th process step in $\sigma_{pid}^{id}$ and write $\theta_{pid}^{id} = \langle e_1^s, ..., e_m^s \rangle$ for the sub-trace of start events of $\sigma_{pid}^{id}$. Any event $e_i^s \in \theta_{pid}^{id}$ that was observed in $L_1$, i.e., $e_i^s \in E_1$, has $\#_{time}(e_i^s) \neq \perp$ defined. By Def.~\ref{def:partial_log}, $\sigma_{pid}^{id}$ as well as $\theta_{pid}^{id}$ always start and end with observed events, i.e., $e_1^s, e_y^s \in E_1$ and $\#_{time}(e_1^s), \#_{time}(e_y^s) \neq \perp$.
An unobserved event $e_i^s$ has no timestamp $\#_{time}(e_i^s) = \perp$ yet, but $\#_{time}(e_i^s)$ is bounded by $\#_{time}(e_1^s)$ (minimally) and $\#_{time}(e_y^s)$ (maximally).
Furthermore, any two succeeding start events in $\theta_{pid} = \langle ..., e_{i-1}^s, e_i^s,... \rangle$ are separated by the service time $\#_{tsr}(e_{i-1}^s)$ of step $e_{i-1}^s$ and the waiting time $\#_{twq}(e_i)$ of the queue from $e_{i-1}$ to $e_i$. Similar to Eq.~\ref{eq:eq_tsr_min} and \ref{eq:eq_tsr_max}, we formulate this constraint for both $x^{\mathrm{tmin}}_e$ and $x^{\mathrm{tmax}}_e$ variables:
\begin{equation}
x^{\mathrm{tmin}}_{e_k^s} \geq  x^{\mathrm{tmin}}_{e_{k-1}^s} + (\#_{tsr}(e_{k-1}^s)+\#_{twq}(e_{k}^s)), \numberthis \label{eq:eq_group1_tmin}
\end{equation}
\begin{equation}
x^{\mathrm{tmax}}_{e_k^s} \leq  x^{\mathrm{tmax}}_{e_{k+1}^s} - (\#_{tsr}(e_{k}^s)+\#_{twq}(e_{k+1}^s)). \numberthis \label{eq:eq_group1_tmax}
\end{equation}
Fig.~\ref{fig:ps_algorithm} uses the \emph{Performance Spectrum}~\cite{DFA_unbiased_fine_grained} to illustrate the effect of applying our approach step by step to the partially complete traces of Fig.~\ref{fig:oracle12a} obtained in the steps of Sect.~\ref{sec:approach}.
The straight lines in Fig.~\ref{fig:ps_algorithm}(a) from $f_1$ to $f_9$ (for pid=53) and from $f_{12}$ to $f_{16}$ (for pid=54) illustrate that $L_2$ (after applying $O_1$) contains all intermediate steps that both process cases passed through but not their timestamps. Further (after applying $O_2$), we know for each process step the resources (i.e., c1, m2, m3, m4, d1) and the queues (c1:m2, m2:m3 etc.), and their minimum service and waiting times $tsr, twr, twq$. The sum $tsr+twq$ is visualized as bars on the time axis in Fig.~\ref{fig:ps_algorithm}(a), the duration of $twr$ is shown in Fig.~\ref{fig:ps_algorithm}(b).

\medskip
We now explain the effect of applying Eq.~\ref{eq:eq_group1_tmin} on pid=53 for $f_3$, $f_5$ and $f_7$.
We have $\theta_{pid}^{53} = \langle f_1, f_3, f_5, f_7, f_9 \rangle$ with $f_1$ and $f_9$ observed, thus $x_{f_i}^{\mathrm{tmin}} = x_{f_i}^{\mathrm{tmax}} = \#_{time}(f_i)$ for $i \in \{1,9\}$. By Eq.~\ref{eq:eq_group1_tmin}, we obtain the lower-bound for the time for $f_3$ by $x_{f_3}^{\mathrm{tmin}} \geq x_{f_1}^{\mathrm{tmin}} + \#_{tsr}(f_1) + \#_{twq}(f_3)$ with $\#_{tsr}(f_1)$ and $\#_{twq}(f_3)$ the service time of resource c1 and waiting time of queue c1:m2. Similarly, Eq.~\ref{eq:eq_group1_tmin} gives the lower bound for $f_5$ from the lower bound from $f_3$ etc. Conversely, the upper bounds $x_{f_i}^{\mathrm{tmax}}$ are derived from $f_9$ ``downwards'' by Eq.~\ref{eq:eq_group1_tmax}.
This way, we obtain for each $f_i \in \theta_{pid}^{53}$ an initial interval for the time of $f_i$ between the bounds $x_{f_i}^{\mathrm{tmin}} \leq x_{f_i}^{\mathrm{tmax}}$ as shown by the intervals in Fig.~\ref{fig:ps_algorithm}(a).
As $x_{f_1}^{\mathrm{tmin}} = x_{f_1}^{\mathrm{tmax}} = \#_{time}(f_1)$ and $x_{f_1}^{\mathrm{tmin}} = x_{f_1}^{\mathrm{tmax}} = \#_{time}(f_9)$, the lower and upper bounds for the unobserved events in $\theta_{pid}^{53}$ form a polygon as shown in Fig.~\ref{fig:ps_algorithm}(b). Case 53 must have passed over the process steps and resources as a path inside this polygon, i.e., the polygon contains all admissible solutions for the timestamps of the unobserved events of $\theta_{pid}^{53}$; we call this polygon the \emph{region} of case 53. The region for case 54 overlays with the region for case 53.

\begin{figure}[t!]
\vspace*{-1mm}
\scalebox{0.99}{
    \includegraphics[width=\linewidth]{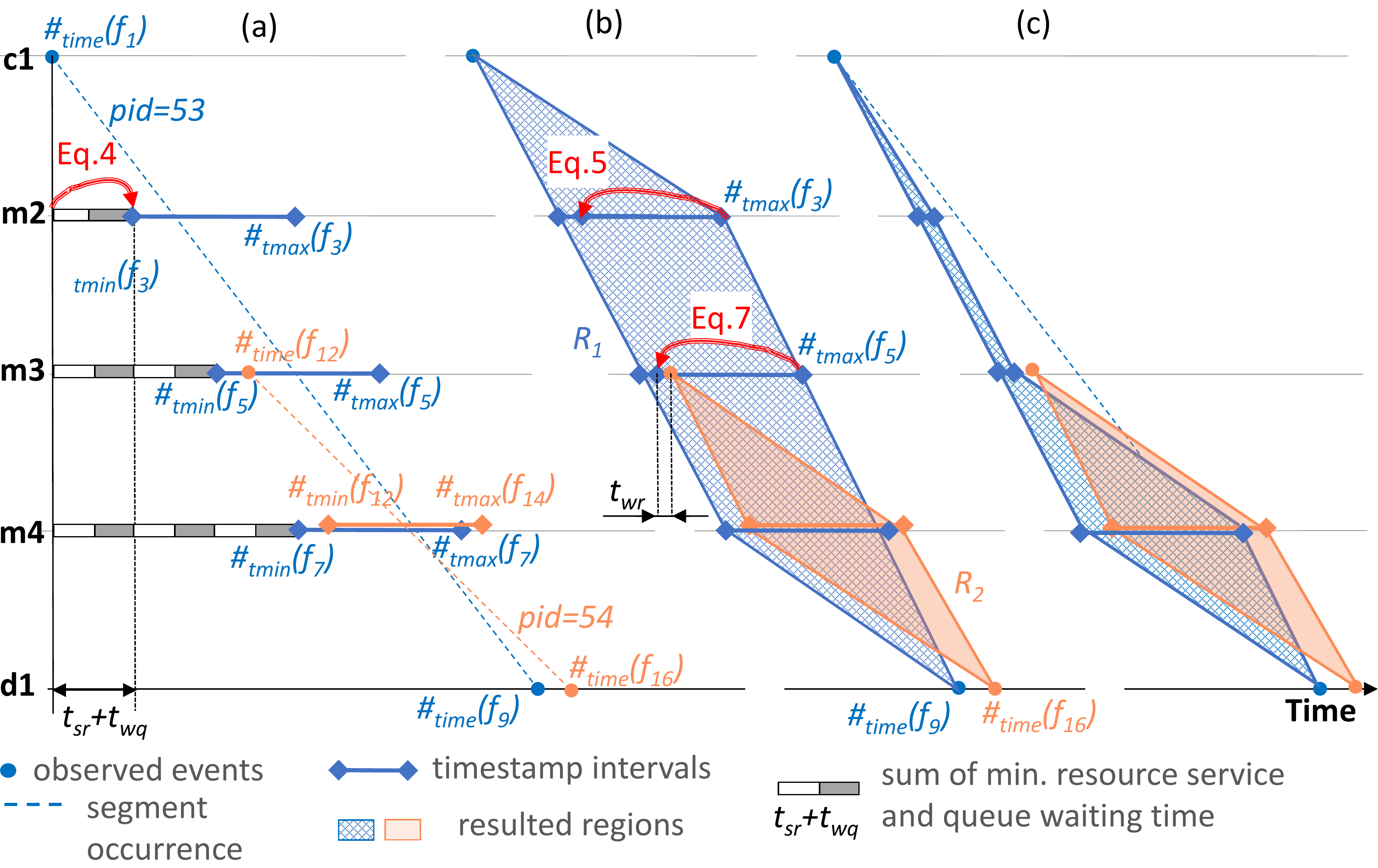}}    \vspace*{-5mm}
    \caption{Equations \ref{eq:eq_tsr_min}-\ref{eq:eq_group1_tmax} define time intervals for unobserved events (a), defining regions for the possible traces (b). Equations \ref{eq:eq_group2_tmin}-\ref{eq:eq_group2_tmax} propagate orders of cases observed on one resource to other resources (b), resulting in tighter regions (c).}    \label{fig:ps_algorithm}\vspace*{-1mm}
\end{figure}

\subsubsection{Propagate information along resource traces}

We now introduce a second group of constraints by which we infer more tight bounds for $x_{e_i}^{\mathrm{tmin}}$ and $x_{e_i}^{\mathrm{tmax}}$ based on the overlap with other regions. While the first group of constraints traversed entity traces along $pid$ (i.e., process traces), the second group of constraints traverses entity traces for resources along $rid$.

\medskip
Each resource trace $\pi_{rid}^r$ in $\pi_2$, contains all events $E_{rid}^r$ resource $r$ was involved in\,--\,\emph{across} multiple different process traces. The SPO $<_{rid}^r$ orders \emph{observed} events of this resource trace due to their known timestamps; e.g. in Fig.~\ref{fig:ps_algorithm}(b) $f_9 <_{rid}^{m1} f_{16}$ with $f_9$ from pid=53 and $f_{16}$ from pid=54.

The order of the two events $e_{p1}^s <_{rid}^r e_{p2}^s$ for the \emph{same} step $\#_{act}(e_{p1}^s) = \#_{act}(e_{p2}^s) = t_1$ in different cases $\#_{pid}(e_{p1}^s) = p1 \neq \#_{pid}(e_{p2}^s) = p2$ propagates ``upwards'' and ``downwards'' the process traces $\pi_{pid}^{p1}$ and $\pi_{pid}^{p2}$ as follows. Let events $f_{p1}^s \in E_{pid}^{p1}$ and $f_{p2} \in E_{pid}^{p2}$ be events in process traces $\pi_{pid}^{p1}$ and  $\pi_{pid}^{p2}$ of the same step $\#_{act}(f_{p1}^s) = \#_{act}(f_{p2}^s) = t_n$. We say $t_1$ and $t_n$ are \emph{in FIFO relation} iff  there is a unique path $\langle t_1 ... t_n \rangle$ between $t_1$ and $t_n$ in the process proclet (i.e., no loops, splits, parallelism) so that between any two consecutive transitions $t_k$, $t_{k+1}$ only synchronize with single-server resources or FIFO queues. If $t_1$ and $t_n$ are in FIFO relation, then also $f_{p1}^s <_{rid}^{r2} f_{p2}^s$ on the resource r2 involved in $t_n$ (as the case $p1$ cannot overtake the case $p2$ along this path). Thus $x_{f_{p1}^s}^{\mathrm{tmin}} \leq x_{f_{p2}^s}^{\mathrm{tmin}}$ must hold. More specifically, $x_{f_{p1}^s}^{\mathrm{tmin}}  + \#_{tsr}(f_{p1}^s) + \#_{twr}(f_{p1}^s) \leq x_{f_{p2}^s}^{\mathrm{tmin}}$ must hold as the service time and waiting time of the resource involved in $f_{p1}^s$ must elapse.

\medskip
For any pair $e_{p1}^s, e_{p2}^s \in E_{rid}^r$ with $e_{p1}^s <_{rid}^r e_{p2}^s$ and any other trace $\theta_{rid}^{r2}$ for resource $r2$ and any pair $f_{p1}^s, f_{p2}^s \in E_{rid}^{r2}$ such that $\#_{pid}(e_{p1}^s) = \#_{pid}(f_{p1}^s), \#_{pid}(e_{p2}^s) = \#_{pid}(f_{p2}^s)$ and transition $\#_{act}(e_{p1}^s)$ is in FIFO relation with $\#_{act}(f_{p1}^s)$, we generate the following constraint for $tmin$ between different process cases $p1$ and $p2$:
\begin{equation}\label{eq:eq_group2_tmin}
x^{\mathrm{tmin}}_{f_{p1}^s}  \leq x^{\mathrm{tmin}}_{f_{p2}^s}-(\#_{tsr}(f_{p1}^s)+\#_{twr}(f_{p1}^s)),
\end{equation}
and the following constraint for $tmax$:
\begin{equation}\label{eq:eq_group2_tmax}
x^{\mathrm{tmax}}_{f_{p1}^s}  \leq x^{\mathrm{tmax}}_{f_{p2}^s}-(\#_{tsr}(f_{p1}^s)+\#_{twr}(f_{p1}^s)),
\end{equation}
In the example of Fig.~\ref{fig:ps_algorithm}(b), we observe $f_9 <_{rid}^{d1} f_{16}$ (both of transition $d1_s$) along resource $d1$ at the bottom of Fig.~\ref{fig:ps_algorithm}(b).
By Fig.~\ref{fig:bhs}, $d1_s$ and $m3_s$ are in FIFO-relation. Applying Eq.~\ref{eq:eq_group2_tmax} yields $x_{f_5}^{\mathrm{tmax}} \leq \#_{time}(f_{12}) - (\#_{tsr}(f_5)+\#_{twr}(f_5))$, i.e., $f_5$ occurs at latest before $f_{12}$ minus the service and waiting time of $m3$.
This operation significantly reduces the initial region $R_1$.
By Eq.~\ref{eq:eq_group1_tmax}, the tighter upper bound for $f_5$ also propagates along the trace pid=53 to $f_3$, i.e., $x_{f_3}^{\mathrm{tmax}} \leq x_{f_5}^{\mathrm{tmax}} - (\#_{tsr}(f_3) + \#_{twq}(f_5))$, resulting in a tighter region as shown in Fig.~\ref{fig:ps_algorithm}(c).
If another trace $\langle m3_s, d1_s \rangle$ were present \emph{before} trace $53$, then this would cause reducing the $tmin$ attributes of the events of trace $53$ by Eq.~\ref{eq:eq_group1_tmin},\ref{eq:eq_group2_tmin} in a similar way.
In general, the more cases interact through shared resources, the more accurate timestamp intervals can be restored by Eq.~\ref{eq:eq_tsr_min}-\ref{eq:eq_group2_tmax} as we will show in Sect.~\ref{sec:evaluation}.

\medskip
To construct the linear program, we generate equations \ref{eq:eq_tsr_min} to \ref{eq:eq_group1_tmax} by iteration of each process trace in $L_2$. Further, iterate over each resource trace and for each pair of events $e_{p1} <_{rid}^{r} e_{p2}$ we generate equations \ref{eq:eq_group2_tmin},\ref{eq:eq_group2_tmax} for each other pair of events $f_{p1} <_{rid}^{r2} f_{p2}$ that is in FIFO relation. The objective function to maximize is the sum of all intervals $\sum_{e \in E_2}(x_e^{\mathrm{tmax}} - x_e^{\mathrm{tmin}})$ to maximize the coverage of possible time-stamp values by those intervals.

Solving this linear program assigns to each event $e \in E_2$ upper and lower bounds $\#_{tmin}(e)$ and $\#_{tmax}(e)$ for $\#_{time}(e)$; $\#_{tmin}(e) = \#_{time}(e) = \#_{tmax}(e)$ for all $e \in E_1$ (by \ref{eq:eq_tsr_min} and \ref{eq:eq_tsr_max} the solutions for the start events propagate to complete events with time difference $tsr$). By setting $\#_{time}(e) = \#_{tmin}(e)$ (or $\#_{time}(e) = \#_{tmax}(e)$) we obtain $L_3 = (E_2,\mathit{AN}_3,\{pid,rid,qid\},\#^3)$ where the SPO $<_3$ of the system-level run $\pi(L_3)$ refines the SPO $<_2$ constructed explicitly in Sect.~\ref{sec:approach}.

By oracle $O_1$, $\sigma(L_3,pid)$ can be replayed on the P-proclet.

By \ref{eq:eq_tsr_min} and \ref{eq:eq_group2_tmin}, for any two events $e \lessdot_{rid} e'$ the time difference is $\#_{time}(e') - \#_{time}(e) > twr$ or $\#_{time}(e') - \#_{time}(e) > tsr$ of the corresponding R-proclet $R_i$ (depending on whether $e$ replays by the start or the complete transition of $R_i$). Thus, $\sigma(L_3,r)$ can be replayed on the correspond R-proclet for any resource $r$ in $L_3$.

By \ref{eq:eq_tsr_min} and \ref{eq:eq_group1_tmin}, the time-stamps of $e \lessdot_{pid} e'$ where $e$ replays $t_{enq}$ and $e'$ replays $t_{deq}$ of a Q-proclet $Q_i$ have at least time difference $twq$ of $Q_i$ (i.e., the time constraint of $Q_i$ is satisfied). If for two process cases $p1$ and $p2$ we observe $e_{p1} <_{rid} e_{p2}$ at the same step $\#_{act}(e_{p1}) = \#_{act}(e_{p2})$ with $\#_{pid}(e_1) = p1 \neq p2 = \#_{pid}(e_2)$ at some step, the we also observe $f_{p1} <_{rid} f_{p2}$ at another step $\#_{act}(f_{p1}) = \#_{act}(f_{p2})$ with $\#_{pid}(f_1) = p1 \neq p2 = \#_{pid}(f_2)$ at later events $e_1 <_{pid} f_1$ and $e_2 <_{pid} f_2$ (by \ref{eq:eq_group2_tmin} and \ref{eq:eq_group2_tmax}). As in a PQR system, for each queue, the enqueue transition synchronizes with a different resources than the dequeue transition, the relation $e_{p1} <_{qid} e_{p2}$ and $f_{p1} <_{qid} f_{p2}$ also holds if $e_{p1}, e_{p2}$ are enqueue events and $f_{p1}, f_{p2}$ are dequeue events of the same queue $Q_i$. Thus the FIFO constraint of $Q_i$ is satisfied. Thus, $\sigma(L_3,q)$ can be replayed on the correspond Q-proclet for any queue $q$ in $L_3$.

Altogether, $L_3$ is a complete log that can be replayed on the PQR system $S$ (by Def.~\ref{def:pqr-system} and Def.~\ref{def:cpn_proclet_system:replay_semantics}).\vspace*{-1mm}

\section{Evaluation}
\label{sec:evaluation}

To evaluate our approach, we formulated the following questions.
(Q1)  Can timestamps be estimated in real-life settings and used to estimate performance reliably?
(Q2) How accurately can the load (items per minute) be estimated for different system parts, using restored timestamps?
(Q3) What is the impact of sudden deviations from the minimum service/waiting times, e.g., the unavailability of resource or stop/restart of an MHS conveyor, on the accuracy of restored timestamps and the computed load?
For that, we extended the interactive ProM plug-in ``Performance Spectrum Miner'' with an implementation of our approach that solves the constraints using heuristics\footnote{The simulation model, simulation logs, ProM plugin, and high-resolution figures are available on \url{https://github.com/processmining-in-logistics/psm/tree/rel}.}.
As input we considered the process of a part of real-life BHS shown in Fig.~\ref{fig:sim_mfd} and used Synthetic Logs (SL) (simulated from a model to obtain ground-truth timestamps) and Real-life Logs (RL) from a major European airport. Regarding Q3, we generated SL with regular performance and with \emph{blockages} of belts (i.e., a temporary stand-still); the

\begin{figure}[!h]
\vspace*{1mm}
    \includegraphics[width=\linewidth]{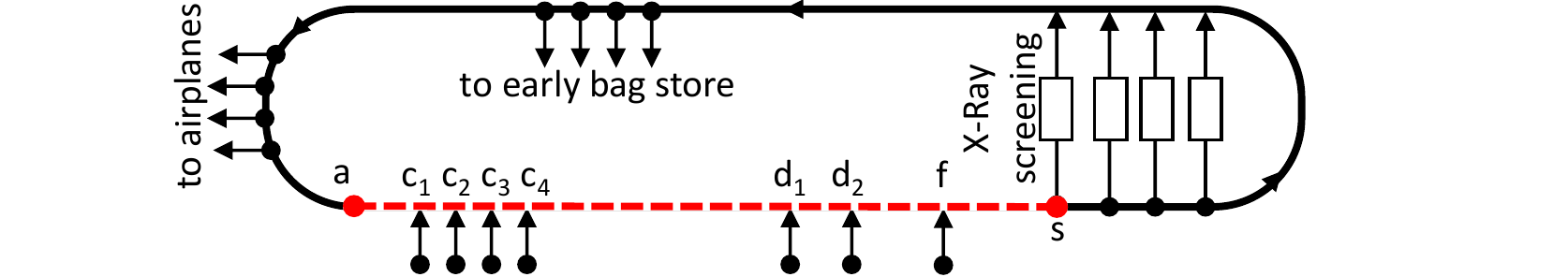}
    \caption{In the BHS bags come from check-in counters $c_{1-4}$ and another terminals $d_{1-2},f$, go through mandatory screening and continue to other locations.}
    \label{fig:sim_mfd}
\end{figure}

\noindent RL contained both performance characteristics. All logs were partial as described in Sect.~\ref{sec:problem_statement}. We selected the acyclic fragment highlighted in Fig.~\ref{fig:sim_mfd} for restoring timestamps of steps $c_{1-4},d_{1-2},f,s$. %Appendix~\ref{sec:apx:experiments} provides details of experiment setup and execution.\todo{remove}

\begin{figure}[h]
\vspace*{3mm}
\centering
    \includegraphics[width=.59\linewidth]{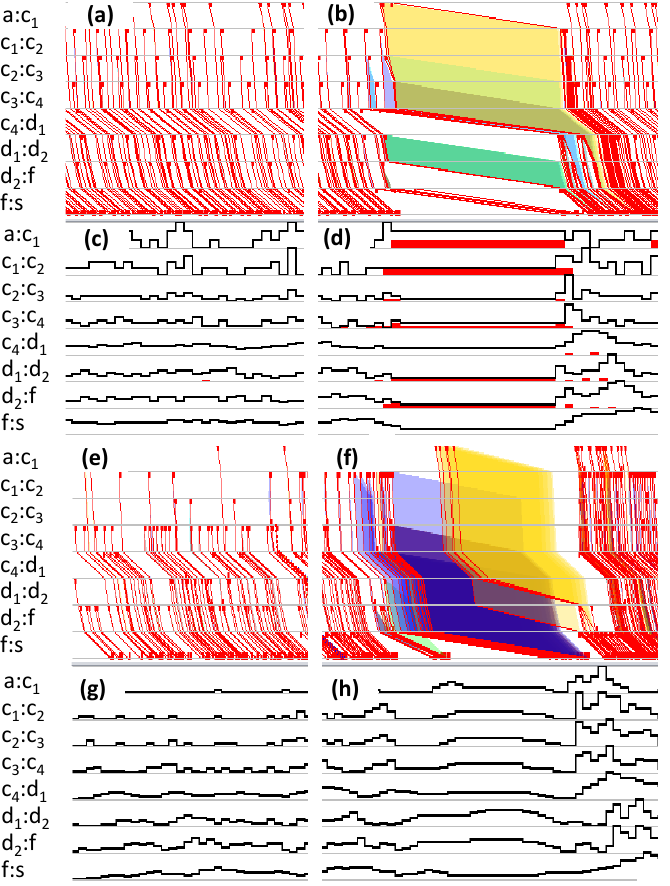}
    \caption{Restored Performance Spectrum for synthetic (a,b) and real-life (e,f) logs. The estimated load (computed on estimated timestamps) for synthetic (c,d) and real-life (g,h) logs. For the synthetic logs, the load error is measured and shown in red (c,d). Colored-shaded regions indicate for selected traces the boundaries of timestamps of reconstructed events between different observed events $a$ to $s$ (yellow), $c1$ to $s$ (blue), $d1$ to $s$ (green).
    }
    \label{fig:ps_experiments}
\end{figure}

%\medskip
We evaluated our technique against the ground truth known for SL as follows. For each event we measured the error of the estimated timestamp intervals $[t_{min},t_{max}]$ against the actual time $t$ as $\max \{ |t_{max}-t|,|t_{min}-t| \}$ normalized over the sum of minimal service and waiting times of all involved steps (to make errors comparable). We report the Mean Absolute Error (MAE) and Root Mean Square Error (RMSE) of these errors. Applying our technique to SL with regular behavior, we observed very narrow time intervals for the estimated timestamps, shown in Fig.~\ref{fig:ps_experiments}(a), and a MAE of $<5\%$. The MAE of the estimated load (computed on estimated timestamps), shown in Fig.~\ref{fig:ps_experiments}(c), was $<2\%$. For SL with blockage behavior, the intervals grew proportionally with the duration of blockages (Fig.~\ref{fig:ps_experiments}(b)), leading to a proportional growth of the MAE for the timestamps. However, the MAE of the estimated load (Fig.~\ref{fig:ps_experiments}(d)) was at most $4\%$. The load MAE for different processing steps for both scenarios are shown in Table~\ref{table:results}.
Notably, both observed and reconstructed load showed load peaks each time the conveyor belt starts moving again.

\begin{table}[h]
\vspace*{-1mm}
\begin{center}
\caption{The estimated load (computed on estimated timestamps) Root Mean Squared Error (RMSE) and Mean Absolute Error (MAE) are shown in \% of max. load.}\label{table:results}
\scalebox{0.98}{
\begin{tabular}[t!]{ lrrr }
 \hline
 Scenario &   $c_4:d_1$ & $d_1:d_2$ & $f:s$ \\
 \hline
   no blockages\\
   \hspace{1em}MAE  & 0.16 &  0.22 &  0.17 \\
   \hspace{1em}RMSE & 1.01 & 1.66  &  0.89 \\
 \hline
   blockages\\
   \hspace{1em}MAE  &1.67  & 3.19  & 0.15 \\
   \hspace{1em}RMSE & 4.8  & 7.17  & 0.75 \\
  \hline
\end{tabular} }
\end{center}\vspace*{-1mm}
\end{table}

When evaluating on the real-life event log, we measured errors of timestamps estimation as the length of the estimated intervals (normalized over the sum of minimal service and waiting times of all involved steps). Performance spectra built using the restored RL logs are shown in Fig.~\ref{fig:ps_experiments}(e,f), and the load computed using these logs is shown in Fig.~\ref{fig:ps_experiments}(g,h).
The observed MAE was $<5\%$ in regular behavior and increased proportionally as observed on SL. The load error could not be measured, but similarly to synthetic data, it showed peaks after assumed conveyor stops.

The obtained results on SL show that the timestamps can be always estimated, and the actual timestamps are always within the timestamp intervals (Q1). When the system resources and queues operate close to the known performance parameters $tsr, twr, twq$, our approach restores accurate timestamps resulting in reliable load estimates in SL (Q2).
During deviations in resource performance, the errors increase proportionally with performance deviation while the estimated load remains reliable (error $<4\%$ in SL) and shows known characteristics from real-life systems on SL and RL (Q3).

\section{Conclusion}
\label{sec:conclusion}

In this paper, we studied the problem of repairing a partial event log with missing events for the performance analysis of systems where case interact and compete for shared limited resources.
We addressed the problem of repairing partial event logs that contain only a subset of events which impede the performance analysis of systems with shared limited resources and queues.

To study and solve the problem, we had to develop novel syntactic and semantic models for behavior over multiple entities. We specifically introduced a generalized model of event data over multiple behavioral entities that can be viewed, both, as sequential traces (with shared events) and as a partial order over the entire system behavior. We have shown in solving our problem that both perspectives are needed when reasoning about behavior of multiple entities.

To express domain knowledge about resources and queues, we had to extend the model of synchronous proclets~\cite{DFahlandMultiDim} with concepts for time and data, resulting in the notion of CPN proclet systems. A side effect of our work is a replay semantics for regular coloured Petri nets. We defined the sub-class of PQR systems to model processes served by shared resources and queues. Our model allows to decompose the interactions of resources and queues over multiple process cases into entity traces for process cases, resources and queues that synchronize on shared events (both on the syntactic and on the semantic level).

We exploit the decomposition when restoring missing events along the process traces using trace alignment~\cite{Carmona_ConfChecking}. We exploit the synchronization when formulating linear programming constraints over timestamps of restored events along, both, the process and the resource traces. As a result, we obtain timestamps which are consistent for all events along the process, resource, and queue dimensions. The evaluation of our implementation on synthetic and real-life data shows errors of the estimated timestamps and of derived performance characteristics (i.e., load) of $<5\%$ under regular performance, while correctly restoring real-life dynamics (i.e. load peaks) after irregular performance behavior.

\vspace{-2mm}
\paragraph{Limitations}
The work made several limiting assumptions. (1) Although the proclet formalism allows for arbitrary, dynamic synchronizations between process steps, resources, and queues, we limited ourselves in this work to a static known resource/queue id per process step. The limitation is not severe for some use cases such as analyzing MHS, but generalizing oracle $O_2$ to a dynamic setting is an open problem. (2) The LP constraints to restore timestamps assume an acyclic process proclet without concurrency. Further, the LP constraints assume 1:1 interactions (at most one resource and/or queue per process step). Both assumptions do not hold in business processes in general; formulating the constraints for a more general setting is an open problem. (3) Our approach ensures consistency of either all earliest or all latest timestamps with the given model, it does not suggest how to select timestamps between the latest and
earliest such that the consistency holds. (4) When the system performance significantly changes, e.g., due to sudden unavailability of resources, the error of restored timestamps is growing proportionally with the duration of deviations. Points (3) and (4) require attention to further improve event log quality for performance analysis.

\vspace{-2mm}
\paragraph{Future work}
Besides addressing the above limitations, our novel syntactic and semantic models open up new alleys of research for modeling and analyzing behavior over multiple entities, including more general conformance and process discovery. Moreover, the replay semantics for coloured Petri nets is likely to enable new kinds of process mining and conformance checking analyses beyond the types of systems studied in this paper.

\vspace{-2mm}
\paragraph{Acknowledgements} The research leading to these results has received funding from
Vanderlande Industries in the project ``Process Mining in Logistics''. We also thank Mitchel Brunings for his comments that greatly improved our approach.

            %%\bibliographystyle{fundam}\bibliography{pqr_timestamp_repair_fi}

\begin{thebibliography}{10}
\providecommand{\url}[1]{\texttt{#1}}
\providecommand{\urlprefix}{URL }
\expandafter\ifx\csname urlstyle\endcsname\relax
  \providecommand{\doi}[1]{doi:\discretionary{}{}{}#1}\else
  \providecommand{\doi}{doi:\discretionary{}{}{}\begingroup
  \urlstyle{rm}\Url}\fi
\providecommand{\eprint}[2][]{\url{#2}}

\bibitem{MarusterB_2009_redesigning}
Maruster L, van Beest NRTP.
\newblock Redesigning business processes: a methodology based on simulation and
  process mining techniques.
\newblock \emph{Knowl. Inf. Syst.}, 2009.
\newblock \textbf{21}(3):267--297. doi:10.1007/s10115-009-0224-0.

\bibitem{MarquezChamorro_Survey}
M\'arquez-Chamorro AE, Resinas M, Ruiz-Cort\'es A.
\newblock Predictive Monitoring of Business Processes: A Survey.
\newblock \emph{IEEE Transactions on Services Computing}, 2018.
\newblock \textbf{11}(6):962--977.
\newblock \doi{10.1109/TSC.2017.2772256}.

\bibitem{DFA_unbiased_fine_grained}
Denisov V, Fahland D, van~der Aalst WMP.
\newblock Unbiased, Fine-Grained Description of Processes Performance from
  Event Data.
\newblock In: Weske M, Montali M, Weber I, vom Brocke J (eds.), Business
  Process Management. Springer International Publishing, Cham.
\newblock  2018 pp. 139--157.  ISBN:978-3-319-98648-7.

\bibitem{AhmedPCL_OnlineRisk}
Ahmed T, Pedersen TB, Calders T, Lu H.
\newblock Online Risk Prediction for Indoor Moving Objects.
\newblock In: 2016 17th IEEE International Conference on Mobile Data Management
  (MDM), volume~1. 2016 pp. 102--111.
\newblock \doi{10.1109/MDM.2016.27}.

\bibitem{DFA_ICPM2019}
{Denisov} V, {Fahland} D, {van der Aalst} WMP.
\newblock Predictive Performance Monitoring of Material Handling Systems Using
  the Performance Spectrum.
\newblock In: 2019 International Conference on Process Mining (ICPM). 2019 pp.
  137--144.
\newblock \doi{10.1109/ICPM.2019.00029}.

\bibitem{DFahlandMultiDim}
Fahland D.
\newblock Describing Behavior of Processes with Many-to-Many Interactions.
\newblock In: Donatelli S, Haar S (eds.), Application and Theory of Petri Nets
  and Concurrency. Springer International Publishing, Cham.
\newblock  2019 pp. 3--24.   ISBN:978-3-030-21571-2.

\bibitem{DBLP:journals/cacm/JensenK15}
Jensen K, Kristensen LM.
\newblock Colored Petri nets: a graphical language for formal modeling and
  validation of concurrent systems.
\newblock \emph{Commun. {ACM}}, 2015.
\newblock \textbf{58}(6):61--70.
\newblock \doi{10.1145/2663340}.
%%\newblock \urlprefix\url{https://doi.org/10.1145/2663340}.

\bibitem{Schrijver86}
Schrijver A.
\newblock Theory of Linear and Integer Programming.
\newblock John Wiley \& Sons, Chichester, 1986.

\bibitem{Aalst_2016_pm_book}
van~der Aalst WMP.
\newblock Process Mining - Data Science in Action, Second Edition.
\newblock Springer, 2016.  ISBN-10:9783662498507, 13:978-3662498507.

\bibitem{Senderovich_DD}
Senderovich A, Francescomarino CD, Maggi FM.
\newblock From knowledge-driven to data-driven inter-case feature encoding in
  predictive process monitoring.
\newblock \emph{Inf. Syst.}, 2019.
\newblock \textbf{84}:255--264.
\newblock \doi{10.1016/j.is.2019.01.007}.
%%\newblock \urlprefix\url{https://doi.org/10.1016/j.is.2019.01.007}.

\bibitem{GansCallCenters}
Gans N, Koole G, Mandelbaum A.
\newblock Telephone Call Centers: Tutorial, Review, and Research Prospects.
\newblock \emph{Manufacturing \& Service Operations Management}, 2003.
\newblock \textbf{5}(2):79--141.  doi:10.1287/msom.5.2.79.16071.

\bibitem{BrownStatAnalysisCallCenter}
Brown L, Gans N, Mandelbaum A, Sakov A, Shen H, Zeltyn S, Zhao L.
\newblock Statistical Analysis of a Telephone Call Center.
\newblock \emph{Journal of the American Statistical Association}, 2005.
\newblock \textbf{100}(469):36--50.
\newblock \doi{10.1198/016214504000001808}.
%\newblock \eprint{https://doi.org/10.1198/016214504000001808},
  %\urlprefix\url{https://doi.org/10.1198/016214504000001808}.

\bibitem{Senderovich2014QueueM}
Senderovich A, Weidlich M, Gal A, Mandelbaum A.
\newblock Queue Mining - Predicting Delays in Service Processes.
\newblock In: Advanced Information Systems Engineering - 26th International
  Conference, CAiSE 2014, Thessaloniki, Greece, June 16-20, 2014. Proceedings,
  volume 8484 of \emph{Lecture Notes in Computer Science}. Springer, 2014 pp. 42--57.
\newblock \doi{10.1007/978-3-319-07881-6\_4}.
%%\newblock \urlprefix\url{https://doi.org/10.1007/978-3-319-07881-6\_4}.

\bibitem{SenderovichConGraphs}
Senderovich A, Beck J, Gal A, Weidlich M.
\newblock Congestion Graphs for Automated Time Predictions.
\newblock \emph{Proceedings of the AAAI Conference on Artificial Intelligence},  2019.
\newblock \textbf{33}:4854--4861.
\newblock \doi{10.1609/aaai.v33i01.33014854}.

\bibitem{SURIADI2017132}
Suriadi S, Andrews R, ter Hofstede A, Wynn M.
\newblock Event log imperfection patterns for process mining: Towards a
  systematic approach to cleaning event logs.
\newblock \emph{Information Systems}, 2017.
\newblock \textbf{64}:132 -- 150.
\newblock \doi{https://doi.org/10.1016/j.is.2016.07.011}.
%%\newblock   \urlprefix\url{http://www.sciencedirect.com/science/article/pii/S0306437915301344}.

\bibitem{Conforti2018}
Conforti R, La~Rosa M, ter Hofstede A.
\newblock Timestamp Repair for Business Process Event Logs.
\newblock Technical report, 2018.
\newblock \urlprefix\url{http://hdl.handle.net/11343/209011}.

\bibitem{MARTIN2020101463}
Martin N, Depaire B, Caris A, Schepers D.
\newblock Retrieving the resource availability calendars of a process from an
  event log.
\newblock \emph{Information Systems}, 2020.
\newblock \textbf{88}:101463.
\newblock \doi{https://doi.org/10.1016/j.is.2019.101463}.
%%\newblock  \urlprefix\url{http://www.sciencedirect.com/science/article/pii/S0306437919305150}.

\bibitem{Carmona_ConfChecking}
Carmona J, van Dongen B, Solti A, Weidlich M.
\newblock Conformance Checking - Relating Processes and Models.
\newblock Springer, 2018.   ISBN:978-3-319-99414-7.

\bibitem{PegoraroUncertainData}
Pegoraro M, van~der Aalst WMP.
\newblock Mining Uncertain Event Data in Process Mining.
\newblock In: International Conference on Process Mining, {ICPM} 2019, Aachen,
  Germany, June 24-26, 2019. {IEEE}, 2019 pp. 89--96.
\newblock \doi{10.1109/ICPM.2019.00023}.
%%\newblock \urlprefix\url{https://doi.org/10.1109/ICPM.2019.00023}.

\bibitem{PegoraroDiscovery}
Pegoraro M, Uysal MS, van~der Aalst WMP.
\newblock Discovering Process Models from Uncertain Event Data.
\newblock In: Di~Francescomarino C, Dijkman R, Zdun U (eds.), Business Process
  Management Workshops. Springer International Publishing, Cham.  2019 pp. 238--249.
\newblock ISBN:978-3-030-37453-2.

\bibitem{AalstPROCLETS}
van~der Aalst WMP, Barthelmess P, Ellis CA, Wainer J.
\newblock Proclets: A Framework for Lightweight Interacting Workflow Processes.
\newblock \emph{International Journal of Cooperative Information Systems},   2001.
\newblock \textbf{10}(04):443--481.
%%\newblock \doi{10.1142/S0218843001000412}.

\bibitem{DBLP:journals/fuin/AalstB20}
van~der Aalst WMP, Berti A.
\newblock Discovering Object-centric Petri Nets.
\newblock \emph{Fundam. Informaticae}, 2020.
\newblock \textbf{175}(1-4):1--40.
\newblock \doi{10.3233/FI-2020-1946}.
%%\newblock \urlprefix\url{https://doi.org/10.3233/FI-2020-1946}.

\bibitem{DBLP:conf/bpm/GhilardiGMR20}
Ghilardi S, Gianola A, Montali M, Rivkin A.
\newblock Petri Nets with Parameterised Data - Modelling and Verification.
\newblock In: Fahland D, Ghidini C, Becker J, Dumas M (eds.), Business Process
  Management - 18th International Conference, {BPM} 2020, Seville, Spain,
  September 13-18, 2020, Proceedings, volume 12168 of \emph{Lecture Notes in
  Computer Science}. Springer, 2020 pp. 55--74.
\newblock \doi{10.1007/978-3-030-58666-9\_4}.
%%%\newblock \urlprefix\url{https://doi.org/10.1007/978-3-030-58666-9\_4}.

\bibitem{DBLP:conf/caise/SteinauAR19}
Steinau S, Andrews K, Reichert M.
\newblock Coordinating Large Distributed Process Structures.
\newblock In: Reinhartz{-}Berger I, Zdravkovic J, Gulden J, Schmidt R (eds.),
  Enterprise, Business-Process and Information Systems Modeling - 20th
  International Conference, {BPMDS} 2019, 24th International Conference,
  {EMMSAD} 2019, Held at CAiSE 2019, Rome, Italy, June 3-4, 2019, Proceedings,
  volume 352 of \emph{Lecture Notes in Business Information Processing}.
  Springer, 2019 pp. 19--34.
\newblock \doi{10.1007/978-3-030-20618-5\_2}.
%%\newblock \urlprefix\url{https://doi.org/10.1007/978-3-030-20618-5\_2}.

\bibitem{DBLP:journals/ijcis/PopovaFD15}
Popova V, Fahland D, Dumas M.
\newblock Artifact Lifecycle Discovery.
\newblock \emph{Int. J. Cooperative Inf. Syst.}, 2015.
\newblock \textbf{24}(1):1550001:1--1550001:44.
\newblock \doi{10.1142/S021884301550001X}.
%%%\newblock \urlprefix\url{https://doi.org/10.1142/S021884301550001X}.

\bibitem{DBLP:conf/sefm/Aalst19}
van~der Aalst WMP.
\newblock Object-Centric Process Mining: Dealing with Divergence and
  Convergence in Event Data.
\newblock In: {\"{O}}lveczky PC, Sala{\"{u}}n G (eds.), Software Engineering
  and Formal Methods - 17th International Conference, {SEFM} 2019, Oslo,
  Norway, September 18-20, 2019, Proceedings, volume 11724 of \emph{Lecture
  Notes in Computer Science}. Springer, 2019 pp. 3--25.
\newblock \doi{10.1007/978-3-030-30446-1\_1}.
%%\newblock \urlprefix\url{https://doi.org/10.1007/978-3-030-30446-1\_1}.

\bibitem{DBLP:journals/tsc/LuNWF15}
Lu X, Nagelkerke M, van~de Wiel D, Fahland D.
\newblock Discovering Interacting Artifacts from {ERP} Systems.
\newblock \emph{{IEEE} Trans. Serv. Comput.}, 2015.
\newblock \textbf{8}(6):861--873.
\newblock \doi{10.1109/TSC.2015.2474358}.
%%\newblock \urlprefix\url{https://doi.org/10.1109/TSC.2015.2474358}.

\bibitem{DBLP:journals/tsc/WernerG15}
Werner M, Gehrke N.
\newblock Multilevel Process Mining for Financial Audits.
\newblock \emph{{IEEE} Trans. Serv. Comput.}, 2015.
\newblock \textbf{8}(6):820--832.
\newblock \doi{10.1109/TSC.2015.2457907}.
%%\newblock \urlprefix\url{https://doi.org/10.1109/TSC.2015.2457907}.

\bibitem{DBLP:conf/bpm/EsserF19}
Esser S, Fahland D.
\newblock Storing and Querying Multi-dimensional Process Event Logs Using Graph
  Databases.
\newblock In: Francescomarino CD, Dijkman RM, Zdun U (eds.), Business Process
  Management Workshops - {BPM} 2019 International Workshops, Vienna, Austria,
  September 1-6, 2019, Revised Selected Papers, volume 362 of \emph{Lecture
  Notes in Business Information Processing}. Springer, 2019 pp. 632--644.
\newblock \doi{10.1007/978-3-030-37453-2\_51}.
%%\newblock \urlprefix\url{https://doi.org/10.1007/978-3-030-37453-2\_51}.

\bibitem{DBLP:conf/simpda/BertiA19}
Berti A, van~der Aalst WMP.
\newblock Extracting Multiple Viewpoint Models from Relational Databases.
\newblock In: Ceravolo P, van Keulen M, L{\'{o}}pez MTG (eds.), Data-Driven
  Process Discovery and Analysis - 8th {IFIP} {WG} 2.6 International Symposium,
  {SIMPDA} 2018, Seville, Spain, December 13-14, 2018, and 9th International
  Symposium, {SIMPDA} 2019, Bled, Slovenia, September 8, 2019, Revised Selected
  Papers, volume 379 of \emph{Lecture Notes in Business Information
  Processing}. Springer, 2019 pp. 24--51.
\newblock \doi{10.1007/978-3-030-46633-6\_2}.
%%\newblock \urlprefix\url{https://doi.org/10.1007/978-3-030-46633-6\_2}.

\bibitem{DBLP:journals/jodsn/EsserF21}
Esser S, Fahland D.
\newblock Multi-Dimensional Event Data in Graph Databases.
\newblock \emph{J. Data Semant.}, 2021.
\newblock \textbf{10}(1):109--141.
\newblock \doi{10.1007/s13740-021-00122-1}.
%%\newblock \urlprefix\url{https://doi.org/10.1007/s13740-021-00122-1}.

\bibitem{DBLP:conf/apn/DenisovFA20}
Denisov V, Fahland D, van~der Aalst WMP.
\newblock Repairing Event Logs with Missing Events to Support Performance
  Analysis of Systems with Shared Resources.
\newblock In: Janicki R, Sidorova N, Chatain T (eds.), Application and Theory
  of Petri Nets and Concurrency - 41st International Conference, {PETRI} {NETS}
  2020, Paris, France, June 24-25, 2020, Proceedings, volume 12152 of
  \emph{Lecture Notes in Computer Science}. Springer, 2020 pp. 239--259.
\newblock \doi{10.1007/978-3-030-51831-8\_12}.
%%\newblock \urlprefix\url{https://doi.org/10.1007/978-3-030-51831-8\_12}.

\bibitem{VelardoF:2008:nu_nets}
Rosa{-}Velardo F, de~Frutos{-}Escrig D.
\newblock Name Creation vs. Replication in Petri Net Systems.
\newblock \emph{Fundam. Inform.}, 2008.
\newblock \textbf{88}(3):329--356.

\bibitem{Aalst2004WorkflowMD}
van~der Aalst WMP, Weijters AJMM, Maruster L.
\newblock Workflow mining: discovering process models from event logs.
\newblock \emph{IEEE TKDE}, 2004.
\newblock \textbf{16}:1128--1142.   doi:10.1109/TKDE.2004.47.

\bibitem{TraceAlginment}
van~der Aalst WMP, Adriansyah A, Dongen B.
\newblock Replaying History on Process Models for Conformance Checking and
  Performance Analysis.
\newblock \emph{WIREs Data Mining and Knowledge Discovery}, 2012.
\newblock \textbf{2}:182--192.
\newblock \doi{10.1002/widm.1045}.
\end{thebibliography}

%%%%%%%%%%%%%%%%%%%%%%%%%%%%%%%%%%%%%%%%%%%%%%%%%%%%%%%%%%%%%%%%%%%%%%

\end{document}